\def\BibTeX{{\rm B\kern-.05em{\sc i\kern-.025em b}\kern-.08em
    T\kern-.1667em\lower.7ex\hbox{E}\kern-.125emX}}
\acrodef{5G}{the fifth generation}
\acrodef{MIMO}{multiple-input multiple-output}
\acrodef{SISO}{single-input single-output}
\acrodef{RF}{radio frequency}
\acrodef{LoS}{line-of-sight}
\acrodef{NLoS}{non-line-of-sight}
\acrodef{AoA}{angle-of-arrival}
\acrodef{AoD}{angle-of-departure}
\acrodef{UPA}{uniform planar array}
\acrodef{ARV}{array response vector}
\acrodef{EM}{electromagnetic}
\acrodef{MA}{movable antenna}
\acrodef{BS}{base station}
\acrodef{UE}{user equipment}
\acrodef{ERA}{electromagnetically reconfigurable antenna}
\acrodef{AWGN}{additive white Gaussian noise}
\acrodef{HFSS}{High-Frequency Structure Simulator}
\acrodef{WMMSE}{weighted minimum mean-square error}
\acrodef{BCD}{block coordinate descent}
\acrodef{ZF}{zero-forcing}
\acrodef{SNR}{Signal-to-Noise Ratio}
\acrodef{6G}{sixth generation}
\acrodef{3D}{three-dimensional}
\acrodef{2D}{two-dimensional}
\acrodef{RIS}{reconfigurable intelligent surface}
\newtheorem{lemma}{\textbf{Lemma}}
\newtheorem{proposition}{\textbf{Proposition}}
\newtheorem{remark}{\textbf{Remark}}
\newenvironment{proof}{\textit{\textbf{Proof:}}}{\hfill$\square$}
\newcommand{\TT}{\mathsf{T}}
\newcommand{\HH}{\mathsf{H}}
\newcommand{\av}{{\bf a}}
\newcommand{\bv}{{\bf b}}
\newcommand{\cv}{{\bf c}}
\newcommand{\dv}{{\bf d}}
\newcommand{\fv}{{\bf f}}
\newcommand{\gv}{{\bf g}}
\newcommand{\nv}{{\bf n}}
\newcommand{\pv}{{\bf p}}
\newcommand{\rv}{{\bf r}}
\newcommand{\sv}{{\bf s}}
\newcommand{\uv}{{\bf u}}
\newcommand{\vv}{{\bf v}}
\newcommand{\xv}{{\bf x}}
\newcommand{\yv}{{\bf y}}
\newcommand{\Am}{{\bf A}}
\newcommand{\Bm}{{\bf B}}
\newcommand{\Cm}{{\bf C}}
\newcommand{\Dm}{{\bf D}}
\newcommand{\Em}{{\bf E}}
\newcommand{\Fm}{{\bf F}}
\newcommand{\Gm}{{\bf G}}
\newcommand{\Hm}{{\bf H}}
\newcommand{\Qm}{{\bf Q}}
\newcommand{\Um}{{\bf U}}
\newcommand{\Wm}{{\bf W}}
\newcommand{\Dt}{{\mathsf D}}
\newcommand{\gammav}{\hbox{\boldmath$\gamma$}}
\newcommand{\Gammam}{\hbox{\boldmath$\Gamma$}}
\algnewcommand{\LineComment}[1]{\Statex \hskip\ALG@thistlm \(\triangleright\) #1}
\begin{document}
\bstctlcite{IEEEexample:BSTcontrol} % Control reference style

\title{Tri-Hybrid Multi-User Precoding Using Pattern-Reconfigurable Antennas: \\Fundamental Models and Practical Algorithms
}

\author{Pinjun~Zheng, Yuchen~Zhang,~Tareq~Y.~Al-Naffouri, Md.~Jahangir Hossain, and Anas~Chaaban \vspace{-2em}

\thanks{Pinjun Zheng, Md.~Jahangir Hossain, and Anas~Chaaban are with the School of Engineering, The University of British Columbia, Kelowna, BC V1V 1V7, Canada (e-mail: pinjun.zheng@ubc.ca; jahangir.hossain@ubc.ca; anas.chaaban@ubc.ca). Yuchen~Zhang and Tareq~Y.~Al-Naffouri are with the Electrical and Computer Engineering Program, Division of Computer, Electrical and Mathematical Sciences and Engineering (CEMSE), King Abdullah University of Science and Technology (KAUST), Thuwal, 23955-6900, Kingdom of Saudi Arabia (e-mail: yuchen.zhang@kaust.edu.sa; tareq.alnaffouri@kaust.edu.sa).
} 
%\thanks{Identify applicable funding agency here. If none, delete this.}
}

\maketitle

\begin{abstract}
The integration of pattern-reconfigurable antennas into hybrid multiple-input multiple-output (MIMO) architectures presents a promising path toward high-efficiency and low-cost transceiver solutions. Pattern-reconfigurable antennas can dynamically steer per-antenna radiation patterns, enabling more efficient power utilization and interference suppression. In this work, we study a tri-hybrid MIMO architecture for multi-user communication that integrates digital, analog, and antenna-domain precoding using pattern-reconfigurable antennas. For characterizing the reconfigurability of antenna radiation patterns, we develop two models---Model~I and Model~II. Model~I captures realistic hardware constraints through limited pattern selection selection, while Model~II explores the performance upper bound by assuming arbitrary pattern generation. Based on these models, we develop two corresponding tri-hybrid precoding algorithms grounded in the weighted minimum mean square error (WMMSE) framework, which alternately optimize the digital, analog, and antenna precoders under practical per-antenna power constraints. Realistic simulations conducted in ray-tracing generated environments are utilized to evaluate the proposed system and algorithms. The results demonstrate the significant potential of the considered tri-hybrid architecture in enhancing communication performance and hardware efficiency. However, they also reveal that the existing hardware is not yet capable of fully realizing these performance gains, underscoring the need for joint progress in antenna design and communication theory development.  
\end{abstract}
\begin{IEEEkeywords}
pattern-reconfigurable antennas, tri-hybrid MIMO, multi-user precoding, per-antenna power constraint.
\end{IEEEkeywords}

\section{Introduction}

Since their inception in the 1990s, \ac{MIMO} technologies have become a cornerstone of modern wireless communication systems~\cite{PAULRAJ2004Overview}. By deploying multiple antennas at both the transmitter and receiver, \ac{MIMO} systems leverage spatial diversity to mitigate fading and enhance link reliability, while employing spatial multiplexing to transmit independent data streams in parallel, thereby significantly improving spectral efficiency under constraints on bandwidth and power. These advantages have led to the widespread adoption of \ac{MIMO} in wireless standards such as LTE, 5G NR, and Wi-Fi~\cite{Li2010MIMO,Kim2015WLAN,Parkvall2017NR}. As communication systems shift toward higher-frequency bands, such as mmWave, large-scale antenna arrays are necessary to overcome increased path loss, which introduces challenges in hardware cost and system complexity. To address this problem, hybrid beamforming architecture has emerged as a promising solution, combining digital and analog processing to approximate the performance of fully digital systems while using a reduced number of \ac{RF} chains~\cite{Alkhateeb2014MIMO,Sohrabi2016Hybrid,Gao2016Energy}. With the ongoing evolution toward massive \ac{MIMO}~\cite{Marzetta2015Massive}, extra-large-scale massive \ac{MIMO}~\cite{Carvalho2020Stationarities}, and the growing emphasis on energy and cost efficiency in the future \ac{6G}~\cite{Andrews20246G}, the development of more efficient transceiver architectures remain a key research focus.

{Antenna-level reconfigurability is key to achieving efficient and resilient transceivers. Over the years, numerous reconfigurable antenna designs have been explored for integration into \ac{MIMO} transceiver architectures, including \acp{RIS}~\cite{Wang2024Wideband} and fluid antennas~\cite{Wong2021Fluid}. \Acp{RIS} act as programmable reflectors that can intelligently reshape wireless channels, while fluid antennas (sometimes also referred to as movable antennas~\cite{Ning2025Movable}) introduce additional degrees of freedom by reconfiguring the physical positions and/or orientations of antennas. Beyond these technologies, further reconfigurability is being realized through electromagnetic reconfigurable antennas, which enable dynamic adjustment of each antenna element's electromagnetic properties at the transmitter or receiver, such as radiation pattern, polarization, and operating frequency~\cite{Castellanos2025Embracing}. Additionally, novel designs like dynamic metasurface antennas employ waveguide-based leaky-wave structures to fundamentally alter the radiation mechanism~\cite{Williams2022Electromagnetic}, which have been demonstrated to exhibit a high energy efficiency~\cite{Castellanos2025Embracing}.} 

While various reconfigurable antennas are being actively investigated, their integration into conventional hybrid \ac{MIMO} systems remains insufficiently explored. In this work, we study a tri-hybrid \ac{MIMO} system that leverages pattern-reconfigurable antennas. By dynamically steering energy toward intended directions, these antennas enable more efficient power utilization and more effective interference suppression than their fixed-pattern counterparts~\cite{Zheng2025Enhanced}. This flexibility can be highly beneficial for enhancing conventional hybrid \ac{MIMO} systems through the integration of a tri-hybrid architecture. In fact, recent advances in the prototyping and experimental validation of pattern-reconfigurable antennas have initially demonstrated these advantages~\cite{Wang2025Electromagnetically}. However, the full potential of such architecture has not been well studied or understood so far, which motivates this work. {It should be noted that while this work focuses on the pattern-reconfigurable antenna-based tri-hybrid \ac{MIMO} system, the considered architecture is not intended to compete with or replace existing reconfigurable antenna technologies. Each technology offers unique features and serves distinct roles in communication systems. We advocate that pattern-reconfigurable antennas provide additional degrees of freedom from a new perspective, rather than claiming that they are superior to other reconfigurable antenna technologies.} 

{Recent theoretical studies have begun to explore the potential of pattern-reconfigurable antennas in multi-user MIMO systems. For example,~\cite{Ying2025Reconfigurable} demonstrates notable throughput gains using pattern-reconfigurable antennas in fully digital arrays, while~\cite{Liu2025Tri} investigates tri-hybrid beamforming architectures that incorporate pattern reconfigurability. Although these works highlight the promise of such antennas, several challenges remain unaddressed. For instance,~\cite{Liu2025Tri} employs angular-domain discretization of radiation patterns, which leads to extremely large effective channel matrices and high computational complexity, particularly in 3D scenarios. More generally, existing models often either oversimplify hardware constraints or incur prohibitive computational costs, limiting their practical applicability. Furthermore, most prior studies lack realistic performance evaluations with actual hardware data and fail to provide scalable tri-hybrid precoding algorithms suitable for large-scale systems.

To address these gaps, this paper presents a unified and physically grounded framework for tri-hybrid multi-user precoding with pattern-reconfigurable antennas. We aim to develop tractable models that accurately reflect physical operational principles, hardware limitations, and performance upper bounds. Such models enable the development of scalable tri-hybrid precoding algorithms that jointly optimize digital, analog, and antenna-domain processing under realistic constraints and with affordable computational complexity. Furthermore, we conduct realistic simulations using ray-tracing-based channels and state-of-the-art antenna prototypes, providing insights into both achievable gains and current hardware limitations. %This holistic approach advances the understanding of tri-hybrid architectures and supports future research and deployment.
}The main contributions of this paper are as follows:

\begin{itemize} 

\item To establish a comprehensive foundation, we develop two models to characterize per-antenna radiation pattern reconfigurability within the tri-hybrid \ac{MIMO} architecture. The first model captures limited pattern selection, reflecting the practical operational principles of most existing reconfigurable antenna hardware. The second model permits arbitrary radiation pattern generation via spherical harmonics decomposition, representing an idealized upper bound that may become achievable with ongoing hardware advancements. Both models provide more tractable formulations compared to existing approaches, without introducing significant additional complexity. % Furthermore, the derivation clearly shows how our models naturally reduce to the conventional \ac{MIMO} channel model, ensuring backward compatibility with prior work on traditional hybrid \ac{MIMO} systems.

\item Using the developed models, we design two precoding algorithms based on the \ac{WMMSE} framework. These algorithms jointly optimize the digital, analog, and antenna-domain precoders in an alternating fashion, and incorporate practical \emph{per-antenna power constraints}. Both algorithms exhibit quadratic  (rather than exponential) complexity with respect to the number of \ac{BS} antennas, ensuring their practicality in large-scale \ac{MIMO} systems.

\item We conduct extensive simulations to validate the considered tri-hybrid \ac{MIMO} system and the proposed algorithms. A ray-tracing-based channel generation is employed to emulate realistic propagation environments. Our simulations aim to answer the following key questions: 
\begin{itemize}
\item[\textbf{Q1}] How much performance gain can be achieved by incorporating pattern-reconfigurable antennas into multi-user hybrid \ac{MIMO} systems? {This question can be examined from two perspectives: (i) the ultimate potential of this technology, and (ii) the limits of current hardware.}
\item[\textbf{Q2}] Can the introduction of reconfigurable radiation patterns reduce the required number of costly \ac{RF} chains and antennas without compromising performance?
\end{itemize} 
\end{itemize}
%The results presented in this paper are fully reproducible, and the source code will be made publicly available in a future version.

The remainder of this paper is organized as follows. Section~\ref{sec:SACM} introduces the general signal and channel models. Section~\ref{sec:IARPR} extends the channel model to incorporate antenna radiation pattern reconfigurability, leading to the proposed Model~I and Model~II. Based on these two models, we propose corresponding tri-hybrid precoding algorithms in Sections~\ref{sec:algoM1} and~\ref{sec:algoM2}, respectively. Simulation results are presented in Section~\ref{sec:sims}, and conclusion is drawn in Section~\ref{sec:cons}.

We adopt the following notation throughout the paper. Non-bold italic lower and upper case letters (e.g., $a, A$) denote scalars, bold lower case letters (e.g., $\av$) denote vectors, and bold upper case letters (e.g., $\Am$) denote matrices. We use~$[\Am]_{i,j}$ to denote the entry at the~$i$$^\text{th}$ row and~$j$$^\text{th}$ column of the matrix~$\Am$. In addition,~$[\Am]_{n,:}$ and~$[\Am]_{:,m}$ denote $n$$^\text{th}$ row and $m$$^\text{th}$ column of~$\Am$, respectively. The superscripts ${(\cdot)}^\TT$, ${(\cdot)}^*$, ${(\cdot)}^\HH$, and ${(\cdot)}^{-1}$ represent the transpose, conjugate, Hermitian (conjugate transpose), and inverse operators, respectively. The notations $\|\cdot\|_0$, $\|\cdot\|_2$, and $\|\cdot\|_\mathsf{F}$ denote the $\ell_0$ quasi-norm, the $\ell_2$ norm, and the Frobenius norm, respectively. Additionally, {$\mathbb{R}$ denotes the set of all real numbers, $\mathbb{C}$ denotes the set of all complex numbers,} $\mathrm{Tr}(\cdot)$ denotes the trace of a square matrix, $\mathrm{Re}(\cdot)$ returns the real part, $\odot$ denotes the Hadamard (element-wise) product, and $\otimes$ represents the Kronecker product.

\section{Signal and Channel Models}\label{sec:SACM}
As illustrated in Fig.~\ref{fig_system}, we consider a downlink multi-user \ac{MIMO} system where an $N$-antenna \ac{BS} simultaneously serves $K$ users, each equipped with $M_k$ antennas. Unlike conventional digital-analog hybrid precoding, this work further incorporates antenna radiation pattern reconfigurability at the \ac{BS}, allowing each antenna's radiation pattern to be independently controlled. %This results in a tri-hybrid precoding architecture that spans digital, analog, and antenna domains. 
Before analyzing the impact of radiation pattern reconfigurability, this section first outlines the adopted signal and channel models.

\begin{figure}[t]
  \centering
  \includegraphics[width=1\linewidth]{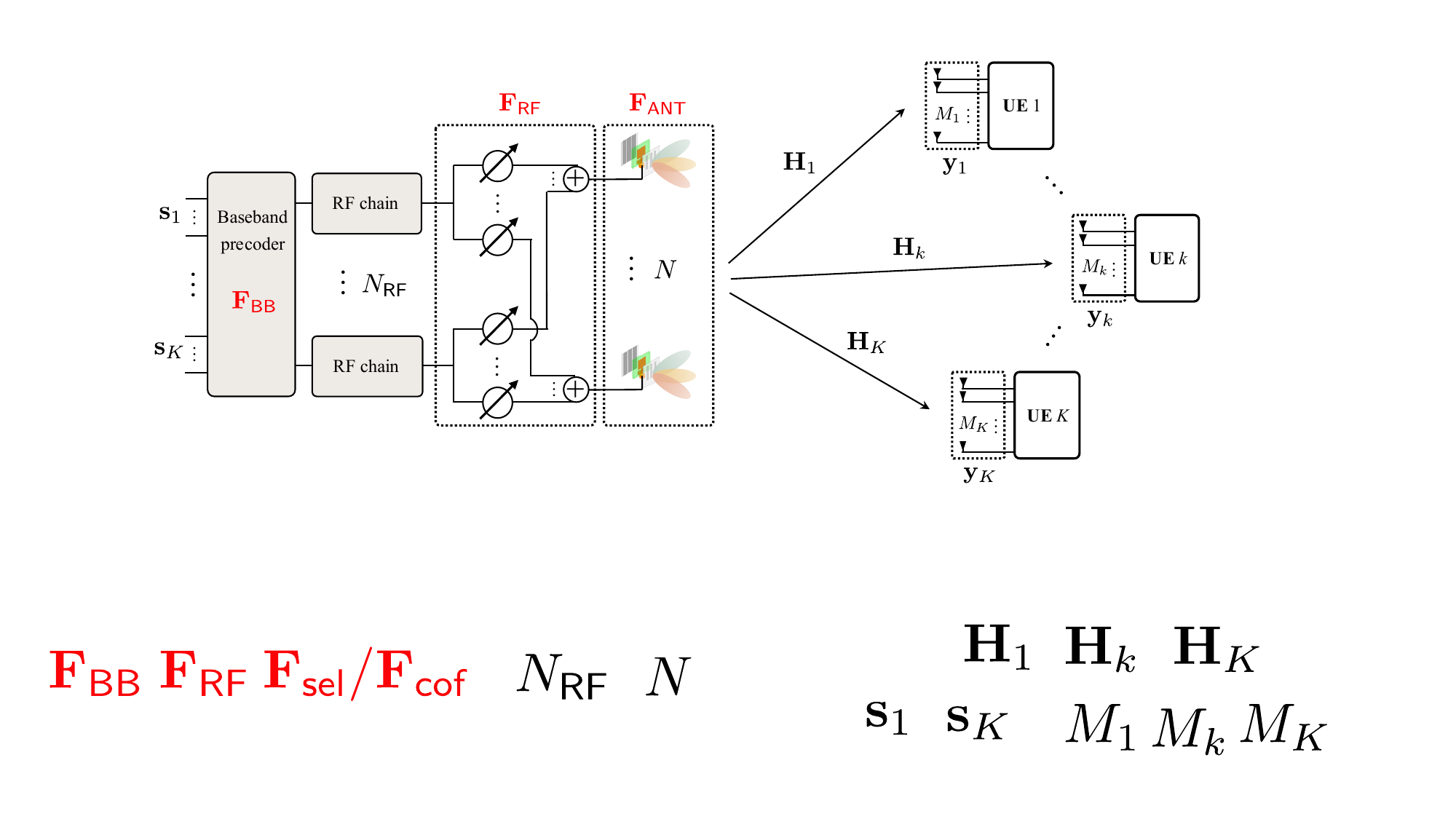}
  \vspace{-2em}
  \caption{
Illustration of a multi-user \ac{MIMO} system employing a tri-hybrid precoding architecture at the BS. In addition to digital and analog precoding, the design also incorporates reconfigurable radiation patterns at each transmit antenna. The antenna precoder, denoted as $\Fm_\mathsf{ANT}$, will be detailed in Section~\ref{sec:IARPR}. Depending on the adopted model, $\Fm_\mathsf{ANT}$ may represent either the selection matrix $\Fm_\mathsf{sel}$ or the coefficient matrix $\Fm_\mathsf{cof}$.}
  \label{fig_system}
\end{figure}

\subsection{Signal Model}

Consider a \ac{BS} equipped with $N_\mathsf{RF}$ \ac{RF} chains and $N$ \ac{RF} phase shifters. Assume that the $k^\text{th}$ user requires $D_k$ data streams, and let $\sv_k\in\mathbb{C}^{D_k}$ denote this data stream vector. We concatenate all of these data vectors and define $\sv=[\sv_1^\TT,\sv_2^\TT,\dots,\sv_K^\TT]^\TT\in\mathbb{C}^{D}$, where $D=\sum_{k=1}^K D_k$ and we assume $\mathbb{E}[\sv\sv^\HH]=\mathbf{I}$. Let $\Fm_\mathsf{BB}\in\mathbb{C}^{N_\mathsf{RF}\times D}$ denote the baseband digital precoder and $\Fm_\mathsf{RF}\in\mathbb{C}^{N \times N_\mathsf{RF}}$ denote the \ac{RF} analog precoder. Then, the signals fed into the $N$ transmit antennas can be written as~\cite{Sohrabi2016Hybrid,Alkhateeb2015Limited}
\begin{equation}
	\xv = \Fm_\mathsf{RF}\Fm_\mathsf{BB}\sv=\sum_{k=1}^K\Fm_\mathsf{RF}\Fm_{\mathsf{BB},k}\sv_k,
\end{equation}
where $\Fm_{\mathsf{BB},k}\in\mathbb{C}^{N_\mathsf{RF}\times D_k}$ is the digital precoder corresponding to $\sv_k$ and $[\Fm_{\mathsf{BB},1},\Fm_{\mathsf{BB},2},\dots,\Fm_{\mathsf{BB},K}]=\Fm_\mathsf{BB}$. {Typically, the analog precoder $\mathbf{F}_{\mathsf{RF}}$ is physically implemented using phase shifters, which can adjust only the signal phase. This hardware limitation naturally leads to the constant-modulus constraint. In contrast, the digital precoder $\mathbf{F}_{\mathsf{BB}}$ operates in the baseband, where both amplitude and phase are fully controllable via digital signal processing, and is constrained solely by the transmit power~\cite{Ayach2014Spatially,Alkhateeb2015Limited}.} To be clear, this paper assumes each entry of \( \Fm_\mathsf{RF} \) has a constant amplitude, satisfying $|[\Fm_\mathsf{RF}]_{i,j}|^2 = 1/N, \ \forall i,j$~\cite{Alkhateeb2015Limited,Ayach2014Spatially,Huang2024Hybrid}.
	Further, considering an independent power budget \( P_n \) for the power amplifier at each transmit antenna, this paper adopts a practical per-antenna power constraint, which can be written as~\cite{Yu2007Transmitter,Zhao2023Rethinking}  
\begin{equation}
	\big[\Fm_\mathsf{RF} \Fm_{\mathsf{BB}} \Fm_{\mathsf{BB}}^\HH \Fm_\mathsf{RF}^\HH\big]_{n,n}%\!\! =\!\! \sum_{k=1}^{K}\! \big[\Fm_\mathsf{RF} \Fm_{\mathsf{BB},k} \Fm_{\mathsf{BB},k}^\HH \Fm_\mathsf{RF}^\HH\big]_{n,n} \!
	\leq P_n,\ \forall n.
\end{equation}

Let $\Hm_k\in\mathbb{C}^{M_k\times N}$ denote the channel from the \ac{BS} to the $k^\text{th}$ user. The received signal at the $k^\text{th}$ user is expressed as
\begin{align}\label{eq:yk}
	\yv_k &= \Hm_k\xv+\nv_k,\notag\\
	&= \Hm_k \Fm_\mathsf{RF}\Fm_{\mathsf{BB},k}\sv_k + \sum_{i\neq k} \Hm_k\Fm_\mathsf{RF}\Fm_{\mathsf{BB},i}\sv_i + \nv_k,
\end{align}
where the first term is the desired signal for the $k^\text{th}$ user, the second term is the multi-user interference, and $\nv_k~\sim\mathcal{CN}(\mathbf{0},\sigma_k^2\mathbf{I})$ is the additive white Gaussian noise at the $k^\text{th}$ user.
Therefore, the spectral efficiency (rate) of the $k^\text{th}$ user is given by~\cite{Sohrabi2016Hybrid}
\begin{multline}\label{eq:Rk}
	R_k = \log_2 \det \bigg(\mathbf{I}_{M_k}+\Hm_k \Fm_\mathsf{RF}\Fm_{\mathsf{BB},k}\Fm_{\mathsf{BB},k}^\HH\Fm_\mathsf{RF}^\HH\Hm_k^\HH\\
	\times\Big(\sum_{i\neq k}\Hm_k\Fm_\mathsf{RF}\Fm_{\mathsf{BB},i}\Fm_{\mathsf{BB},i}^\HH\Fm_\mathsf{RF}^\HH\Hm_k^\HH+\sigma_k^2\mathbf{I}_{M_k}\Big)^{-1}\bigg).
\end{multline}
Subsequently, the weighted sum-rate of the entire system is
\begin{equation}\label{eq:WSR}
	R = \sum_{k=1}^K \beta_k R_k,
\end{equation}
where $\beta_k\in\mathbb{R}_+$ is a priority weight for the $k^\text{th}$ user. 

%In this paper, we further explore the use of radiation pattern-reconfigurable antennas at the \ac{BS}.
In general, the conventional hybrid precoding architecture jointly designs the baseband precoder $\Fm_\mathsf{BB}$ and the \ac{RF} precoder $\Fm_\mathsf{RF}$ to achieve desired signal reception. In this paper, we will demonstrate that the reconfigurability of the transmit antennas introduces additional flexibility in artificially shaping the wireless channels \( \{\Hm_k\}_{k=1}^K \). To ensure a thorough understanding, we begin by revisiting the fundamental definitions of antenna gain and radiation pattern, upon which the detailed expressions of the wireless channels will be built.

\subsection{Radiation Pattern and Antenna Gain}

\subsubsection{Radiation Pattern}\label{sec:RPD}
An \emph{antenna radiation pattern} is defined as a mathematical function or a graphical representation of the radiation properties of the antenna as a function of space coordinates. In most cases, the radiation pattern is determined in the far-field region of the radiator and is represented as a function of the directional coordinates (e.g, \ac{AoD} or \ac{AoA}). Radiation properties include power flux density, radiation intensity, field strength, directivity, phase or polarization~\cite[Ch. 2]{Balanis2016Antenna}. A radiation pattern can be either \emph{magnitude pattern} or \emph{power pattern}. %In this paper, we refers antenan radiation pattern as the \emph{field} pattern (instead of \emph{power} pattern), i.e., magnitude of the electric or magnetic field as afunction ofthe angular space.

\subsubsection{Antenna Gain}
The \emph{antenna gain} is defined as the radiation intensity of an antenna in a given direction relative to that of an isotropic radiator~\cite{6758443,Balanis2016Antenna}. Typically, the antenna gain refers to the radiation \emph{power} intensity relative to an isotropic antenna. Consider a radiator with a total radiated power $P_\mathsf{rad}$. For an isotropic radiator, its radiation power intensity at any direction $(\theta,\phi)$ is given by $U_\mathsf{iso}(\theta,\phi) = P_\mathsf{rad}/{4\pi}$ (W/sr). Then, for a general radiator with a radiation power intensity distribution $U(\theta,\phi)$, its antenna gain is defined as 
\begin{equation}\label{eq:AGD}
	\mathsf{AG}(\theta,\phi)=\frac{U(\theta,\phi)}{U_\mathsf{iso}(\theta,\phi)} = \frac{4\pi U(\theta,\phi)}{P_\mathsf{rad}}.
\end{equation}
In decibels, we have $\mathsf{AG}_\mathsf{dBi}(\theta,\phi)=10\log_{10}\mathsf{AG}(\theta,\phi)$ (dBi).

In this paper, we define the \emph{magnitude gain} (or \emph{field gain}) of antennas as
\begin{equation}\label{eq:GD}
	G(\theta,\phi) = \sqrt{\mathsf{AG}(\theta,\phi)}.
\end{equation}
According to the definition of radiation pattern in Section~\ref{sec:RPD}, $G(\theta,\phi)$, as a function of spatial direction, is also a radiation pattern. In this paper, we do not distinguish them. Both antenna radiation pattern and antenna magnitude gain refer to $G(\theta,\phi)$. Furthermore, polarization effects are not considered in this study to focus on the tri-hybrid precoding mechanisms enabled by pattern reconfigurability. Future work may incorporate polarization effects to refine channel models.

\begin{remark}\label{rmk:1}
	Considering the total radiation power $P_\mathsf{rad}$, we have~\cite{Balanis2016Antenna}
	\begin{equation}\label{eq:Uint}
		\int_0^{2\pi}\int_0^{\pi}U(\theta,\phi)\sin\theta \,\mathrm{d}\theta \,\mathrm{d}\phi = P_\mathsf{rad}.
	\end{equation}
Then, combining~\eqref{eq:AGD},~\eqref{eq:GD}, and~\eqref{eq:Uint} yields
\begin{align}\label{eq:4pi}
	\int_0^{2\pi}\int_0^{\pi}G^2(\theta,\phi)\sin\theta \,\mathrm{d}\theta \,\mathrm{d}\phi=4\pi.
\end{align}
%\begin{align}\label{eq:4pi}
%	&\int_0^{2\pi}\int_0^{\pi}G^2(\theta,\phi)\sin\theta \,\mathrm{d}\theta \,\mathrm{d}\phi\\
%	=& {\frac{4\pi}{P_\mathsf{rad}} \int_0^{2\pi}\int_0^{\pi}U(\theta,\phi)\sin\theta \,\mathrm{d}\theta \,\mathrm{d}\phi=4\pi.}\notag
%\end{align}
%{Here, $\int_{0}^{2\pi} \int_{0}^{\pi} U(\theta,\phi)\,\sin\theta \,\mathrm{d}\theta \,\mathrm{d}\phi$ denotes the integration of the radiation power intensity over the entire sphere~\cite{Balanis2016Antenna}, which yields the total radiated power $P_{\mathsf{rad}}$.}
\end{remark}
\begin{remark}\label{rmk:2}
	The antenna magnitude gain in linear scale must be strictly positive, i.e.,
	$
		G(\theta,\phi) > 0,\, \forall\, (\theta,\phi). 
	$ 
	However, it may take negative values when expressed in dBi.
\end{remark}

\subsection{A General MIMO Channel Model}
Based on the defined antenna magnitude gain, we present a general \ac{MIMO} channel model applicable to both far-field and near-field scenarios, which facilitates subsequent analysis. In a general narrowband multipath \ac{MIMO} system, the channel matrix \( \Hm_k \) in the frequency domain can be expressed as  
\begin{equation}\label{eq:genChan}
	\Hm_k = \sqrt{\frac{NM_k}{L_k}} \sum_{\ell=1}^{L_k} \Cm_{k,\ell} \odot \Am_{k,\ell} \odot \Gm_{k,\ell}^\mathsf{UE} \odot \Gm_{k,\ell}^\mathsf{BS},
\end{equation}
where \( L_k \) represents the total number of multipath components between the \ac{BS} and the \( k^\text{th} \) user,  
\( \Cm_{k,\ell} \in \mathbb{C}^{M_k \times N} \) stores the complex channel gains,  
\( \Am_{k,\ell} \in \mathbb{C}^{M_k \times N} \) captures the phase differences among array elements (i.e., the array manifold), and  
\( \Gm_{k,\ell}^\mathsf{UE}, \Gm_{k,\ell}^\mathsf{BS} \in \mathbb{R}_+^{M_k\times N} \) are the antenna magnitude gain matrices at the \( k^\text{th} \) user and the \ac{BS}, respectively. Specifically, for the per-antenna link from the \( n^\text{th} \) antenna at the \ac{BS} to the \( m^\text{th} \) antenna at the \( k^\text{th} \) user through the \( \ell^\text{th} \) path, the complex channel gain, phase difference relative to the reference antennas, transmit antenna magnitude gain, and receive antenna magnitude gain are given by \( [\Cm_{k,\ell}]_{m,n} \), \( [\Am_{k,\ell}]_{m,n} \), \( [\Gm_{k,\ell}^\mathsf{BS}]_{m,n} \), and \( [\Gm_{k,\ell}^\mathsf{UE}]_{m,n} \), respectively.

\subsubsection{Channel Component Expressions}
A set of commonly used expressions for these components is given by~\cite{Tarboush2021TeraMIMO}:
\begin{subequations}\label{eq:ChanPara}
\begin{align}
	[\Cm_{k,\ell}]_{m,n} &= \Big(\frac{\lambda}{4\pi d_{k,\ell}^{(mn)}}\Big)^{\zeta/2}e^{j\psi_{k,\ell}^{(mn)}},\\
	[\Am_{k,\ell}]_{m,n} &= \frac{1}{\sqrt{NM_k}} e^{-j\frac{2\pi}{\lambda}\big(d_{k,\ell}^{(mn)}-d_{k,\ell}\big)},\\
	[\Gm_{k,\ell}^\mathsf{BS}]_{m,n} &= G_{(n)}^\mathsf{BS}\big(\theta_{k,\ell}^{(mn)},\phi_{k,\ell}^{(mn)}\big),\label{eq:GBS}\\
	[\Gm_{k,\ell}^\mathsf{UE}]_{m,n} &= G_{k,(m)}^\mathsf{UE}\big(\vartheta_{k,\ell}^{(mn)},\varphi_{k,\ell}^{(mn)}\big),
\end{align}
\end{subequations}
where $m=1,2,\dots,M_k$, and $n=1,2,\dots,N$. 
Note that we use a parenthesis in the right-hand side expressions to highlight the antenna indices, which will be consistently used throughout the paper. Here, $\lambda$ is the signal wavelength, $d_{k,\ell}^{(mn)}$ denote the propagation distance from the $n^\text{th}$ transmit antenna to the $m^\text{th}$ receive antenna through the path $(k,\ell)$, $\zeta$ is  the path loss exponent, $\psi_{k,\ell}^{(mn)}$ is a random phase shift, and $d_{k,\ell}$ is the propagation distance between two reference points at the \ac{BS} and the $k^\text{th}$ user. In addition, we denote each antenna's magnitude gain as $G_{(n)}^\mathsf{BS}$ or $G_{k,(m)}^\mathsf{UE}$, which is a function of signal \ac{AoD} or \ac{AoA}. Considering a 3D space, we use $\theta_{k,\ell}^{(mn)}$ and $\phi_{k,\ell}^{(mn)}$ to respectively denote the inclination and azimuth components of the \ac{AoD} at the BS, and use $\vartheta_{k,\ell}^{(mn)}$ and $\varphi_{k,\ell}^{(mn)}$ to denote that of the \ac{AoA} at the $k^{\text{th}}$ user.

Since~\eqref{eq:genChan} is a general model, our subsequent analysis is based on it without differentiating between far-field and near-field scenarios. This paper focuses on tri-hybrid precoding at the \ac{BS}, while assuming that all users are equipped with conventional antennas with fixed radiation patterns. Under this assumption, the reconfigurability of antenna radiation patterns can be captured by modeling the reconfigurable antenna gain in~\eqref{eq:GBS}. Following this, two models for pattern-reconfigurable antennas are introduced and discussed in Section~\ref{sec:IARPR}.
  
\section{Incorporating Antenna Radiation Pattern Reconfigurability}\label{sec:IARPR}

The reconfigurability of antenna radiation patterns can be modeled using two different approaches. The first is inspired by practical hardware designs, where antennas switch among a finite set of states, each corresponding to a distinct radiation pattern. A representative hardware prototype can be found in, e.g.,~\cite{Wang2025Electromagnetically}. The second approach considers an idealized antenna capable of generating arbitrary radiation patterns on demand, as studied in~\cite{Liu2025Tri}. This section presents a separate integration of both models into the \ac{MIMO} channel~\eqref{eq:genChan}. 

\subsection{Model I: Limited State Selection}

\begin{figure}[t]
  \centering
  \includegraphics[width=0.6\linewidth]{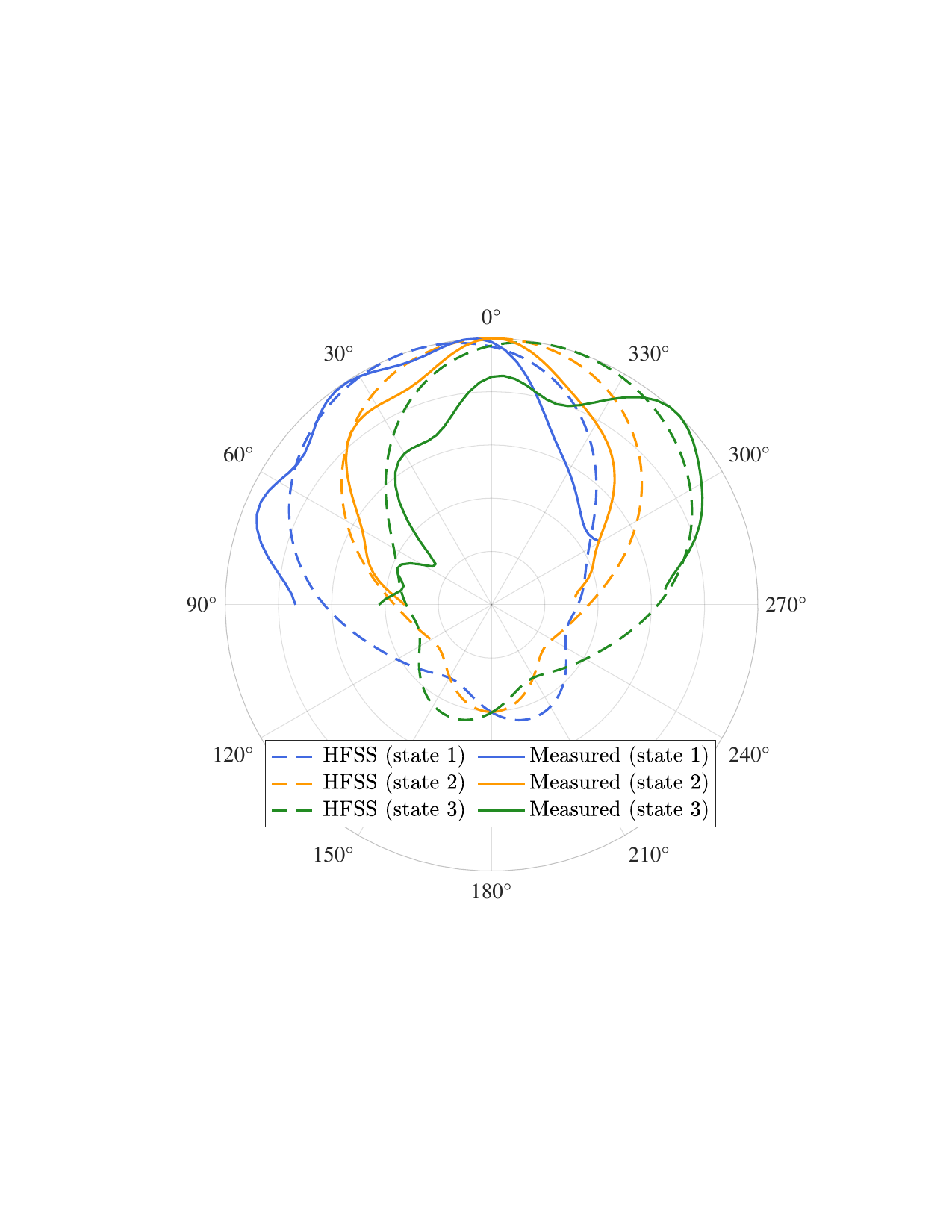}
  \vspace{-0.7em}
  \caption{Illustration of three achievable radiation patterns of the reconfigurable antenna prototype from~\cite{Wang2025Electromagnetically}, showing both full-wave HFSS simulation and anechoic chamber measurement results.}
  \label{fig_trihybrid}
\end{figure}

The limited state selection model offers the most realistic representation, closely aligning with practical hardware designs. Figure~\ref{fig_trihybrid} illustrates an example of several achievable radiation patterns in a reconfigurable antenna prototype~\cite{Wang2025Electromagnetically}. In general, we can assume a set of $S$ available radiation pattern candidates, denoted as $\{\bar{G}_1, \bar{G}_2, \dots, \bar{G}_S\}$, from which each antenna can select its radiation pattern. Again, each $\bar{G}_s$ is a function of direction $(\theta,\phi)$. For a given angle $(\theta,\phi)$, we can define a candidate vector as $\bar{\gv}(\theta,\phi)=[\bar{G}_1(\theta,\phi),\bar{G}_2(\theta,\phi),\dots,\bar{G}_S(\theta,\phi)]^\TT\in\mathbb{R}_+^S$. Thus, the radiation pattern of the \ac{BS} antennas in~\eqref{eq:GBS} can be expressed as
\begin{equation}\label{eq:LSS}
	G_{(n)}^\mathsf{BS}(\theta,\phi) = \bar{\gv}^\TT(\theta,\phi)\bv_{(n)},
\end{equation}
where $\bv_{(n)}$ is a binary vector denoting the state selection of the $n^\text{th}$ antenna. Specifically, the following constraint applies:
\begin{equation}\label{eq:consbn}
\bv_{(n)} \in \left\{ \bv \in \{0, 1\}^S \mid \|\bv\|_0 = 1 \right\},\ \forall n.
\end{equation}

Now, we incorporate expression~\eqref{eq:LSS} into~\eqref{eq:genChan}. We first extend \ac{BS} antenna magnitude gain matrix $\Gm_{k,\ell}^{\mathsf{BS}}$ as 
\begin{equation}
	\bar{\Gm}_{k,\ell}^{\mathsf{BS}} \!=\!\! \begin{bmatrix}
		\bar{\gv}^\TT\big(\theta_{k,\ell}^{(11)},\phi_{k,\ell}^{(11)}\big) \!&\! \!\!\dots\!\! \!&\! \bar{\gv}^\TT\big(\theta_{k,\ell}^{(1N)},\phi_{k,\ell}^{(1N)}\big)\\
		\vdots \!&\!  \!\!\ddots\!\! \!&\! \vdots \\
		\bar{\gv}^\TT\!\big(\theta_{k,\ell}^{(M_k1)}\!\!,\phi_{k,\ell}^{(M_k1)}\big) \!&\! \!\!\dots\!\! \!&\! \bar{\gv}^\TT\!\big(\theta_{k,\ell}^{(M_kN)}\!\!,\phi_{k,\ell}^{(M_kN)}\big)
	\!\end{bmatrix}\!\!,
\end{equation}
and define a state selection matrix as
\begin{equation}\label{eq:Fsel_b}
	\Fm_{\mathsf{sel}} = \mathrm{blkdiag}\big\{\bv_{(1)},\bv_{(2)},\dots,\bv_{(N)}\big\},
\end{equation}
where $\mathrm{blkdiag}$ denotes the block diagonal operator that constructs a block diagonal matrix from its vector inputs.
Note that $\bar{\Gm}_{k,\ell}^{\mathsf{BS}}\in\mathbb{R}_+^{M_k\times NS}$ and $\Fm_{\mathsf{sel}}\in\{0,1\}^{NS\times N}$. Subsequently, channel model~\eqref{eq:genChan} can be rewritten as
\begin{align}\label{eq:HFM1}
	\Hm_k = \Hm_k^\mathsf{sel}\Fm_{\mathsf{sel}},
\end{align}
where~$\Hm_k^\mathsf{sel}$ is the effective channel defined as
\begin{multline}\label{eq:Heff1}
	\Hm_k^{\mathsf{sel}} \triangleq \sqrt{\frac{NM_k}{L_k}} \sum_{\ell=1}^{L_k} \Big(\big(\Cm_{k,\ell} \odot \Am_{k,\ell} \odot \Gm_{k,\ell}^\mathsf{UE}\big)\otimes \mathbf{1}_{1\times S}\Big)\\
	\odot \bar{\Gm}_{k,\ell}^{\mathsf{BS}}\in\mathbb{C}^{M_k\times NS},
\end{multline}

Substituting~\eqref{eq:HFM1} into~\eqref{eq:yk} yields the tri-hybrid signal model based on the limited state selection mechanism as:
\begin{equation}\notag
\boxed{
	\yv_k = \Hm_k^\mathsf{sel}\Fm_{\mathsf{sel}} \Fm_\mathsf{RF}\Fm_{\mathsf{BB},k}\sv_k + \sum_{i\neq k} \Hm_k^\mathsf{sel}\Fm_{\mathsf{sel}}\Fm_\mathsf{RF}\Fm_{\mathsf{BB},i}\sv_i + \nv_k.
}
\end{equation}
This model enables the joint adjustment of $\{\Fm_{\mathsf{sel}},\Fm_\mathsf{RF},\Fm_{\mathsf{BB}}\}$ to achieve the desired signal receptions.

\subsection{Model II: Arbitrary Radiation Pattern Generation}

\begin{figure}[t]
  \centering
  \includegraphics[width=\linewidth]{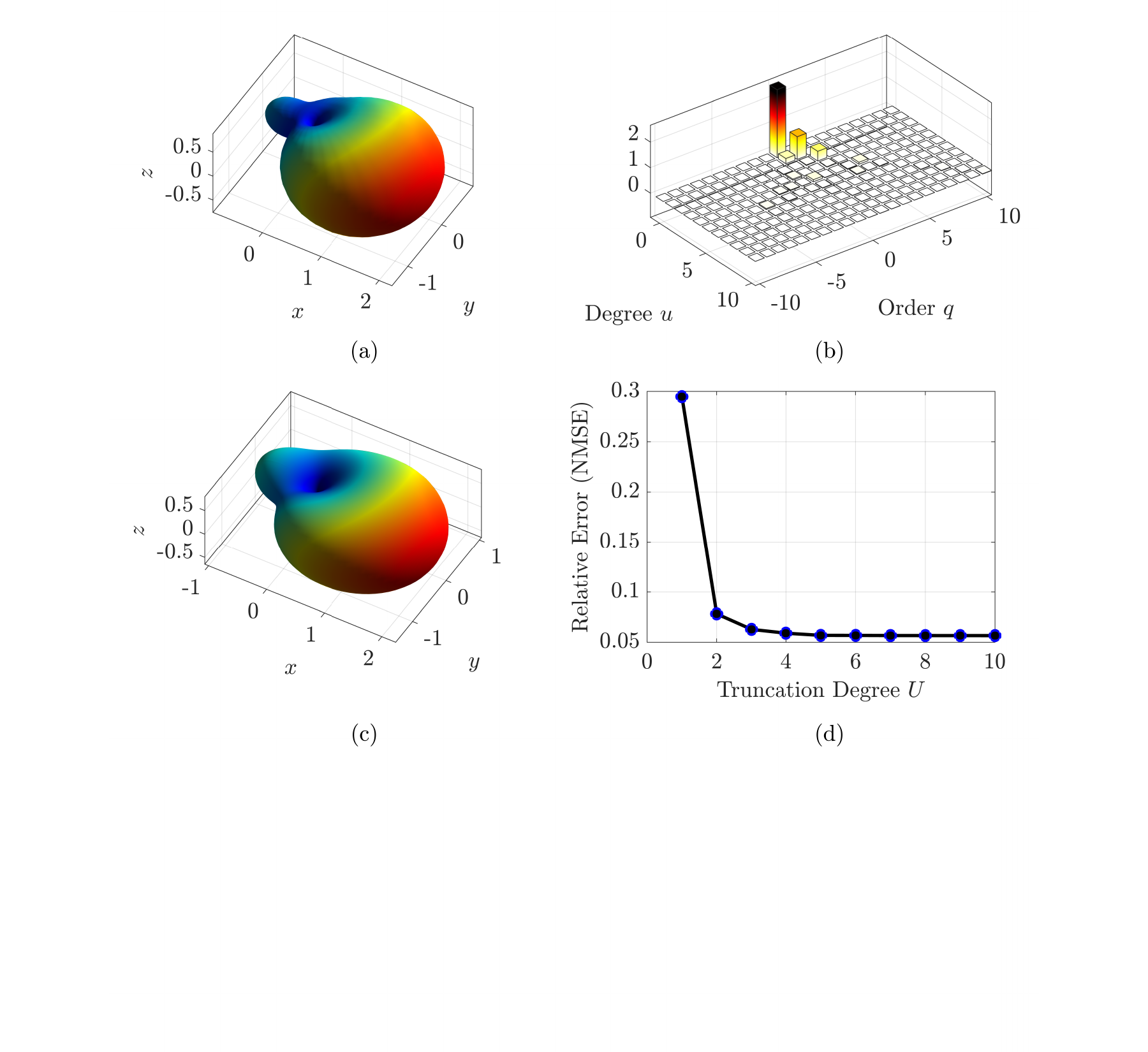}
  \vspace{-1.8em}
  \caption{
  A spherical harmonics decomposition example.
(a)~The radiation pattern for a selected state of the antenna design in ~\cite{Wang2025Electromagnetically}.
(b)~Visualization of its spherical harmonic coefficients, showing the dominance of low-degree coefficients, which justifies the truncation operation.
(c)~Truncated reconstruction with $U=4$. 
{(d)~Relative error between the truncated reconstruction and the original pattern as a function of truncation length $U$.} 
  }
  \label{fig_SHD}
\end{figure}

Apart from the limited state selection model, an alternative approach models pattern-reconfigurable antennas as capable of generating arbitrary radiation patterns. {We acknowledge that arbitrary pattern synthesis is idealized and may not be fully achievable with current hardware. However, this assumption serves to characterize the theoretical upper bound of system performance, providing valuable insight into the ultimate potential of reconfigurable antennas. Furthermore, ongoing advances in antenna design and fabrication are steadily closing the gap toward realizing such flexible pattern generation. For instance, preliminary studies on orthogonal antenna pattern generation have already been reported; see, e.g.,~\cite{Han2021Characteristic}.} %Moreover, it can be made practical by incorporating a projection step that maps the arbitrarily designed radiation pattern onto the set of realizable candidates, as demonstrated in~\red{[our GC paper]}. 

A radiation pattern can be equivalently represented either in the angular domain~\cite{Liu2025Tri} or in the spherical domain~\cite{Ying2025Reconfigurable}. Typically, the angular domain representation results in a much heavier effective channel matrix, whereas the spherical domain allows for a more compact characterization of the radiation pattern using fewer coefficients due to the ``band-limited'' nature of most practical radiation patterns~\cite{Zheng2025Enhanced, Zheng2025Tri}. For this reason, this paper adopts the spherical domain representation. Based on spherical harmonics decomposition, any radiation pattern can be expressed as a superposition of an infinite series of spherical harmonics as
\begin{equation}\label{eq:SH}
    G_{(n)}^\mathsf{BS}(\theta,\phi)=\sum_{u=0}^{+\infty}\sum_{q=-u}^{u} c_{u q}^{(n)} Y_u^q(\theta,\phi).
\end{equation}
Here, $c_{u q}^{(n)}$ denotes the harmonic coefficient and $Y_u^q(\theta,\phi)$ is the \emph{real} spherical harmonics defined as~{\cite{schonefeld2005spherical}}
\begin{equation}\label{eq:Ylm}
Y_u^q(\theta,\phi)\!=\! \left\{
\begin{array}{ll}
\!\!\!\sqrt{2}N_u^qP_u^q(\cos{\theta})\cos{(q\phi)}, & q>0,\\
\!\!\!\sqrt{2}N_u^{|q|}P_u^{|q|}(\cos{\theta})\sin{(|q|\phi)}, & q< 0,\\
\!\!\!N_u^0P_u^0(\cos{\theta}), & q=0,
\end{array} \right.
\end{equation}
where $N_u^q=\sqrt{\frac{2u+1}{4\pi}\frac{(u-q)!}{(u+q)!}}$ is a normalization factor, and $P_u^q(\cos{\theta})$ represents the associated Legendre functions of $u^\text{th}$ degree and $q^\text{th}$ order. This set of functions $Y_{u}^q(\theta,\phi)$ constitutes a complete real orthonormal basis on the spherical space. 

{To make the infinite series expansion in~\eqref{eq:SH} computationally manageable, we approximate it by truncating the series. This is motivated by the observation that most practical radiation patterns concentrate power in low-degree spherical harmonic components, as can be observed in Fig.~\ref{fig_SHD}-(b).} {Specifically, we retain terms in~\eqref{eq:SH} for $u=0,1,\dots,U$ only, which results in a total of $T=U^2+2U+1$ coefficients.} Then, we approximate the radiation pattern of the $n^\text{th}$ antenna as
\begin{align}\label{eq:gn}
    G_{(n)}^\mathsf{BS}(\theta,\phi) \!\approx\! \sum_{u=0}^{U}\sum_{q=-u}^u c_{uq}^{(n)} Y_u^q(\theta,\phi) = \sum_{t=1}^T \tilde{c}_t^{(n)} \tilde{Y}_t(\theta,\phi),
\end{align}
where $\tilde{c}_t^{(n)} = c_{uq}^{(n)}$ and $\tilde{Y}_t(\theta,\phi) = Y_u^q(\theta,\phi)$, for $t=u^2+u+q+1$, $u\in[0,U]$, $q\in[-u,u]$. For convenience, we further concatenate $\cv_{(n)} \triangleq[\tilde{c}_1^{(n)},\tilde{c}_2^{(n)},\dots,\tilde{c}_T^{(n)}]^\TT\in\mathbb{R}^T$, 
$\gammav(\theta,\phi) \triangleq[\tilde{Y}_1(\theta,\phi),\tilde{Y}_2(\theta,\phi),\dots,\tilde{Y}_T(\theta,\phi)]^\TT\in\mathbb{R}^T$, where $\cv_{(n)}$ is the coefficient vector of the $n^\text{th}$ antenna element, and $\gammav(\theta,\phi)$ is the spherical basis vector. Then, we can rewrite~\eqref{eq:SH} as
\begin{equation}\label{eq:Gn_thetaphi}
    G_{(n)}^\mathsf{BS}(\theta,\phi) \approx \gammav^\TT(\theta,\phi)\cv_{(n)}.
\end{equation}
Based on~\eqref{eq:4pi} and the energy conservation law in spherical harmonics transform, we have the following constraint:
\begin{equation}\label{eq:conscn}
	\|\cv_{(n)}\|_2^2 = 4\pi,\ \forall n.
\end{equation}
{An example of spherical harmonics decomposition and the truncated reconstruction is illustrated in Fig.~\ref{fig_SHD}, where we can observe that taking $U\geq 4$ can generally achieve an accurate approximation.} Additionally, we note that~\eqref{eq:Gn_thetaphi} has the same form as~\eqref{eq:LSS} in Model~I. Actually, the $4\pi$ factor is also included in Model~I. Each candidate pattern in $\bar{\gv}(\theta,\phi)$, i.e., $\bar{G}_s(\theta,\phi)$, has a total power $\int_0^{2\pi}\int_0^{\pi}\bar{G}_s^2(\theta,\phi)\sin\theta d\theta d\phi = 4\pi$, according to Remark~\ref{rmk:1}. 

Similarly, we can define a harmonics coefficient matrix as
\begin{equation}\label{eq:Fcof_c}
	\Fm_{\mathsf{cof}} = \mathrm{blkdiag}\big\{\cv_{(1)},\cv_{(2)},\dots,\cv_{(N)}\big\},
\end{equation}
and an extended \ac{BS} antenna radiation pattern basis matrix as
\begin{equation}
	\Gammam_{k,\ell}^{\mathsf{BS}} \!=\!\! \begin{bmatrix}
		\gammav^\TT\big(\theta_{k,\ell}^{(11)},\phi_{k,\ell}^{(11)}\big) \!&\! \!\!\dots\!\! \!&\! \gammav^\TT\big(\theta_{k,\ell}^{(1N)},\phi_{k,\ell}^{(1N)}\big)\\
		\vdots \!&\!  \!\!\ddots\!\! \!&\! \vdots \\
		\gammav^\TT\!\big(\theta_{k,\ell}^{(M_k1)}\!\!,\phi_{k,\ell}^{(M_k1)}\big) \!&\! \!\!\dots\!\! \!&\! \gammav^\TT\!\big(\theta_{k,\ell}^{(M_kN)}\!\!,\phi_{k,\ell}^{(M_kN)}\big)
	\!\end{bmatrix}\!\!.
\end{equation}
Note that $\Fm_{\mathsf{cof}}\in\mathbb{R}^{NT\times N}$ and $\Gammam_{k,\ell}^{\mathsf{BS}}\in\mathbb{R}^{M_k\times NT}$.
Then, the effective channel $\Hm_k^{\mathsf{cof}}$ can be defined as 
\begin{multline}\label{eq:Heff2}
	\Hm_k^{\mathsf{cof}} \triangleq \sqrt{\frac{NM_k}{L_k}} \sum_{\ell=1}^{L_k} \Big(\big(\Cm_{k,\ell} \odot \Am_{k,\ell} \odot \Gm_{k,\ell}^\mathsf{UE}\big)\otimes \mathbf{1}_{1\times T}\Big)\\
	\odot \Gammam_{k,\ell}^{\mathsf{BS}}\in\mathbb{C}^{M_k\times NT}.
\end{multline}
Finally, we have a similar channel expression as 
\begin{equation}\label{eq:HFM2}
	\Hm_k = \Hm_k^\mathsf{cof}\Fm_\mathsf{cof}.
\end{equation}

Substituting~\eqref{eq:HFM2} into~\eqref{eq:yk} yields the other tri-hybrid signal model as 
\begin{equation}\notag
\boxed{
	\yv_k = \Hm_k^\mathsf{cof}\Fm_{\mathsf{cof}} \Fm_\mathsf{RF}\Fm_{\mathsf{BB},k}\sv_k + \sum_{i\neq k} \Hm_k^\mathsf{cof}\Fm_{\mathsf{cof}}\Fm_\mathsf{RF}\Fm_{\mathsf{BB},i}\sv_i + \nv_k.
}
\end{equation}

\subsection{Comparison Between the Two Models}\label{sec:comparison}

Now, we have obtained two models, $\{\Hm_k^\mathsf{sel},\Fm_\mathsf{sel}\}$ and $\{\Hm_k^\mathsf{cof},\Fm_\mathsf{cof}\}$, to characterize the radiation pattern reconfigurability of the antennas at the \ac{BS}. The key differences between these models can be summarized as follows:
\begin{itemize}
	\item \textbf{Different physical interpretations:} Model~I represents the antenna state selection, where \( \Fm_\mathsf{sel} \) denotes the state selection of the \ac{BS} antennas, and \( \Hm_k^\mathsf{sel} \) integrates all possible channels based on the available radiation pattern candidates. In contrast, Model~II models arbitrary radiation pattern synthesis, where \( \Fm_\mathsf{cof} \) represents the spherical harmonic coefficients, and \( \Hm_k^\mathsf{cof} \) is a concatenation of channels based on each spherical basis.
	\item \textbf{Different matrix dimensions:} Both models introduce a dimensional lift to the channels. The original channel is \( \Hm_k \in \mathbb{C}^{M_k \times N} \), the effective channel in Model~I is \( \Hm_k^\mathsf{sel} \in \mathbb{C}^{M_k \times NS} \), while the effective channel in Model~II is \( \Hm_k^\mathsf{cof} \in \mathbb{C}^{M_k \times NT} \). Here, \( S \) denotes the total number of available antenna states, and \( T \) represents the truncation length of the spherical harmonics decomposition.
	\item \textbf{Different constraints:} While both \( \Fm_\mathsf{sel} \) and \( \Fm_\mathsf{cof} \) exhibit a similar block diagonal structure, as shown in \eqref{eq:Fsel_b} and \eqref{eq:Fcof_c}, they impose different constraints. Each diagonal block in \( \Fm_\mathsf{sel} \) is constrained as \( \bv_{(n)} \in \left\{ \bv \in \{0, 1\}^S \mid \|\bv\|_0 = 1 \right\} \), whereas in \( \Fm_\mathsf{cof} \), the constraint is \( \|\cv_{(n)}\|_2^2 = 4\pi \), as shown in \eqref{eq:consbn} and \eqref{eq:conscn}, respectively. The constraint difference significantly affects the optimization algorithm design. We observe that while Model~II is less practical, its constraint is milder, meaning that it is often easier to optimize compared to Model~I.
\end{itemize}

%In the following Section~\ref{sec:algoM1}, we propose a practical tri-hybrid precoding algorithm to optimize the weighted sum-rate by jointly adjusting \( \{\Fm_{\mathsf{sel}}, \Fm_\mathsf{RF}, \Fm_{\mathsf{BB}}\} \) based on Model~I. Next, in Section~\ref{sec:algoM2}, we extend this approach to address the tri-hybrid \( \{\Fm_{\mathsf{cof}}, \Fm_\mathsf{RF}, \Fm_{\mathsf{BB}}\} \) precoding problem under Model~II.

\section{Tri-Hybrid Precoding Based on Model~I}\label{sec:algoM1}\label{sec:algoM1}

Based on Model~I, the considered tri-hybrid precoding problem aims to optimize \( \{\Fm_{\mathsf{sel}}, \Fm_\mathsf{RF}, \Fm_{\mathsf{BB}}\} \) jointly to maximize the weighted sum-rate defined in~\eqref{eq:WSR}. We assume {that an estimate of the effective channels is available. For clarity, we redefine the notation as $\Hm_k^\mathsf{sel} = \bar{\Hm}_k^\mathsf{sel} + \Delta \Hm_k^\mathsf{sel},\ \forall k$, where $\Hm_k^\mathsf{sel}$ now denotes the estimated channel, $\bar{\Hm}_k^\mathsf{sel}$ is the ground truth, and $\Delta \Hm_k^\mathsf{sel}$ represents the estimation error.} To simplify the optimization, we temporarily define a fully digital precoder \( \Fm_\mathsf{D} \!\triangleq\! \Fm_\mathsf{RF} \Fm_{\mathsf{BB}} \!\in\! \mathbb{C}^{N \times D} \), and seek the optimal solution of the reduced set \( \{\Fm_{\mathsf{sel}}, \Fm_\mathsf{D}\} \) instead. The optimized \( \Fm_\mathsf{D} \) is then decomposed back into \( \Fm_\mathsf{RF} \) and \( \Fm_{\mathsf{BB}} \). %This approach has been widely used in hybrid beamforming problems~\cite{Yu2016Alternating,Huang2024Hybrid}.
 Partitioning $\Fm_\Dt = [\Fm_{\Dt,1},\Fm_{\Dt,2},\dots,\Fm_{\Dt,K}]$, where $\Fm_{\Dt,k}\in\mathbb{C}^{N\times D_k}$, we formulate the initial precoding problem as: 
\begin{subequations}\label{eq:optM1}
	\begin{align}
		\max_{\Fm_{\mathsf{sel}}, \Fm_\mathsf{D}}&\ \sum_{k=1}^K\beta_k\log_2 \det \bigg(\mathbf{I}_{M_k}+\Hm_k^\mathsf{sel}\Fm_{\mathsf{sel}}\Fm_{\mathsf{D},k}\Fm_{\mathsf{D},k}^\HH\Fm_{\mathsf{sel}}^\HH(\Hm_k^\mathsf{sel})^\HH \notag \\
	&\hspace{-1em} \times\!\Big(\!\sum_{i\neq k}\Hm_k^\mathsf{sel}\Fm_{\mathsf{sel}}\Fm_{\mathsf{D},i}\Fm_{\mathsf{D},i}^\HH\Fm_{\mathsf{sel}}^\HH(\Hm_k^\mathsf{sel})^\HH\!+\!\sigma_k^2\mathbf{I}_{M_k}\!\Big)^{\!-1}\bigg)\\
	\mathrm{s.t.\ }&\ [\Fm_{\mathsf{D}}\Fm_{\mathsf{D}}^\HH]_{n,n}\leq P_n,\ \forall n, \\
	&\ \Fm_{\mathsf{sel}} = \mathrm{blkdiag}\big\{\bv_{(1)},\bv_{(2)},\dots,\bv_{(N)}\big\},\\
	&\ \bv_{(n)} \in \left\{ \bv \in \{0, 1\}^S \mid \|\bv\|_0 = 1 \right\},\ \forall n.
	\end{align}
\end{subequations}

\subsection{Conversion to the Weighted Sum MSE Minimization}
\label{sec:ConverWSMSEM}

According to~\cite{Shi2011Iteratively,Zhao2023Rethinking}, the formulated weighted sum-rate maximization problem is equivalent to a \ac{WMMSE} minimization problem. Specifically, by applying Lemma~1 in~\cite{Zhao2023Rethinking}, the maximization problem~\eqref{eq:optM1} can be recast as the following minimization problem:
\begin{subequations}\label{eq:WMMSE_M1}
	\begin{align}
		\min_{\Wm,\Um,\Fm_{\mathsf{sel}}, \Fm_\mathsf{D}}&\ \sum_{k=1}^K\beta_k \big(\mathrm{Tr}(\Wm_k\Em_k)-\ln\det(\Wm_k)\big)\\
	\mathrm{s.t.\quad\ }&\ [\Fm_{\mathsf{D}}\Fm_{\mathsf{D}}^\HH]_{n,n}\leq P_n,\ \forall n, \label{eq:PAPC}\\
	&\ \Fm_{\mathsf{sel}} = \mathrm{blkdiag}\big\{\bv_{(1)},\bv_{(2)},\dots,\bv_{(N)}\big\},\\
	&\ \bv_{(n)} \in \left\{ \bv \in \{0, 1\}^S \mid \|\bv\|_0 = 1 \right\},\ \forall n. \label{eq:PASSC}
	\end{align}
\end{subequations}
where 
\begin{multline}\label{eq:Ek}
	\Em_k \triangleq \big(\mathbf{I} - \Um_k^\HH\Hm_k^\mathsf{sel}\Fm_{\mathsf{sel}}\Fm_{\Dt,k}\big)\big(\mathbf{I} - \Um_k^\HH\Hm_k^\mathsf{sel}\Fm_{\mathsf{sel}}\Fm_{\Dt,k}\big)^\HH\\
	+ \Um_k^\HH\Big(\sum_{i\neq k} \Hm_k^\mathsf{sel}\Fm_{\mathsf{sel}}\Fm_{\Dt,i}\Fm_{\Dt,i}^\HH\Fm_{\mathsf{sel}}^\HH(\Hm_k^\mathsf{sel})^\HH+\sigma_k^2\mathbf{I}\Big)\Um_k.
\end{multline}
Here, \( \Wm=\{\Wm_k\in\mathbb{C}^{D_k\times D_k}\}_{k=1}^K \) and \( \Um=\{\Um_k\in\mathbb{C}^{M_k\times D_k}\}_{k=1}^K \) are two sets of auxiliary variables, with an additional constraint $\Wm_k\succ\mathbf{0},\ \forall k$.

Problem~\eqref{eq:WMMSE_M1} can be solved using the \ac{BCD} method~\cite{Bertsekas1997Nonlinear}. Although~\eqref{eq:WMMSE_M1} introduces additional optimization variables, the subproblems with respect to each auxiliary variable in \( \{\Wm, \Um\} \) are individually convex and admit closed-form solutions. As a result, the overall complexity is not significantly increased, while the objective function becomes more tractable compared to the original formulation in~\eqref{eq:optM1}. According to Lemma~1 in~\cite{Zhao2023Rethinking} and defining $\Hm_k=\Hm_k^\mathsf{sel}\Fm_{\mathsf{sel}}$, the optimal \( \{\Wm, \Um\} \) are given by
\begin{align}
	\Um_k^{\mathsf{opt}} &= \Big(\sum_{i=1}^K \Hm_k\Fm_{\Dt,i}\Fm_{\Dt,i}^\HH\Hm_k^\HH +\sigma_k^2\mathbf{I}\Big)^{-1}\Hm_k\Fm_{\Dt,k},\ \forall k,\label{eq:Uopt}\\
	\Wm_k^\mathsf{opt} &= \big(\mathbf{I}-(\Um_k^{\mathsf{opt}})^\HH\Hm_k\Fm_{\Dt,k} \big)^{-1},\ \forall k.\label{eq:Wopt}
\end{align}

\subsection{Per-Antenna Optimization of $\Fm_{\mathsf{sel}}$ and $\Fm_\mathsf{D}$} \label{sec:PAO_MI}

Now we focus on the optimization of $\{\Fm_{\mathsf{sel}}, \Fm_\mathsf{D}\}$ in~\eqref{eq:WMMSE_M1}. Considering the per-antenna power constraint~\eqref{eq:PAPC} and the per-antenna state selection constraint~\eqref{eq:PASSC}, we can naturally treat the variables related to each antenna as a block variable and update each antenna alternately, as implemented in~\cite[Sec.~V]{Zhao2023Rethinking}. To do so, we first conduct a partition to $\Fm_\Dt$ as follows:
\begin{align}\label{eq:FDexpres}
	\Fm_\Dt &= \begin{bmatrix}
		\Fm_{\Dt,1} & \dots & \Fm_{\Dt,k} & \dots & \Fm_{\Dt,K}
	\end{bmatrix}\in\mathbb{C}^{N\times D},\\
	&= \begin{bmatrix}
		\fv_{(1),1}^\HH & \dots & \fv_{(1),k}^\HH & \dots & \fv_{(1),K}^\HH \\
		\vdots & \ddots & \vdots & \ddots & \vdots \\
		\fv_{(N),1}^\HH & \dots & \fv_{(N),k}^\HH & \dots & \fv_{(N),K}^\HH
	\end{bmatrix}=\begin{bmatrix}
		\fv_{(1)}^\HH \\ \vdots \\ \fv_{(N)}^\HH
	\end{bmatrix},\notag
\end{align}
where $\fv_{(n),k}\in\mathbb{C}^{D_k}$ and $\fv_{(n)}\in\mathbb{C}^{D}$. Again $D=\sum_{k=1}^K D_k$. Then, we consider the optimization of \( \{\fv_{(n)}, \bv_{(n)}\} \) given \( \{\fv_{(i)}, \bv_{(i)}\}_{i \neq n} \), \( \Um_k \), and \( \Wm_k \). 

The per-antenna optimization problem is written as
\begin{subequations}\label{eq:optfb}
	\begin{align}
		\min_{\fv_{(n)}, \bv_{(n)}}&\ \sum_{k=1}^K\beta_k \big(\mathrm{Tr}(\Wm_k\Em_k)-\ln\det(\Wm_k)\big) \label{eq:PAoptObj}\\
	\mathrm{s.t.\ \ }&\ \|\fv_{(n)}\|_2^2\leq P_n,\, \ \bv_{(n)} \in \left\{ \bv \in \{0, 1\}^S \mid \|\bv\|_0 = 1 \right\}. \notag
	\end{align}
\end{subequations}
By further partitioning $\Hm_k^\mathsf{sel}\!=\![\Hm_{(1),k}^\mathsf{sel},\dots,\Hm_{(N),k}^\mathsf{sel}]\!\in\!\mathbb{C}^{M_k\!\times\! NS}$, where $\Hm_{(n),k}^\mathsf{sel}\in\mathbb{C}^{M_k\times S}$, we can express
\begin{equation}\label{eq:expressHFF}
	\Hm_k^\mathsf{sel}\Fm_{\mathsf{sel}}\Fm_{\Dt,i} = \sum_{n=1}^N \Hm_{(n),k}^\mathsf{sel}\bv_{(n)}\fv_{(n),i}^\HH,
\end{equation}
based on which the following Lemma~\ref{lemma:1} follows.
%By substituting~\eqref{eq:expressHFF} into~\eqref{eq:Ek} and~\eqref{eq:PAoptObj}, the objective function in~\eqref{eq:PAoptObj} can be rewritten as
%\begin{multline}
%	g\big(\fv_{(n)}, \bv_{(n)}\big) = \sum_{k=1}^K-2\beta_k\mathrm{Tr}\Big(\mathrm{Re}\big(\Wm_k\Um_k^\HH\Hm_k^\mathsf{sel}\Fm_\mathsf{sel}\Fm_{\Dt,k}\big)\Big)\\
%	+ \sum_{k=1}^K\beta_k\mathrm{Tr}\Big(\Wm_k\Um_k^\HH\sum_{i=1}^K\Hm_k^\mathsf{sel}\Fm_\mathsf{sel}\Fm_{\Dt,i}\Fm_{\Dt,i}^\HH\Fm_\mathsf{sel}^\HH(\Hm_k^\mathsf{sel})^\HH\Um_k\Big)\\
%	+\mathrm{constant}.
%\end{multline}
\begin{lemma}\label{lemma:1}
	Optimization problem~\eqref{eq:optfb} is equivalent to:
\begin{subequations}\label{eq:optfb2}
	\begin{align}
		\min_{\fv_{(n)}, \bv_{(n)}}&\ \|\fv_{(n)}\|_2^2\bv_{(n)}^\TT\Bm_{nn}\bv_{(n)} + 2\mathrm{Re}\Big(\fv_{(n)}^\HH(\Qm_n-\Dm_n)\bv_{(n)}\Big)\notag \\
	\mathrm{s.t.\ \ }&\ \|\fv_{(n)}\|_2^2\leq P_n,\\
	&\ \bv_{(n)} \in \left\{ \bv \in \{0, 1\}^S \mid \|\bv\|_0 = 1 \right\},\label{eq:bncons}
	\end{align}
\end{subequations}
where 
\begin{align}
\Bm_{qp} &= \sum_{k=1}^K\beta_k(\Hm_{(q),k}^\mathsf{sel})^\HH\Um_k\Wm_k\Um_k^\HH\Hm_{(p),k}^\mathsf{sel}\in\mathbb{C}^{S\times S},\label{eq:Bm}\\
\Qm_n &= \sum_{q\neq n} \fv_{(q)}\bv_{(q)}^\TT\Bm_{qn}\in\mathbb{C}^{D\times S},\label{eq:Qm} \\
\Dm_n &= 
	\begin{bmatrix}
		\beta_1\Wm_1\Um_1^\HH\Hm_{(n),1}^\mathsf{sel}\vspace{-0.5em}\\
		\vdots\\
		\beta_K\Wm_K\Um_K^\HH\Hm_{(n),K}^\mathsf{sel}
	\end{bmatrix}\in\mathbb{C}^{D\times S}.\label{eq:Dm}
\end{align}
\end{lemma}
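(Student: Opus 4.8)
The plan is to prove equivalence by showing that, given the fixed quantities $\{\Wm_k,\Um_k\}_{k=1}^K$ and $\{\fv_{(i)},\bv_{(i)}\}_{i\neq n}$, the objective of \eqref{eq:optfb} coincides with that of \eqref{eq:optfb2} up to an additive term that does not depend on $(\fv_{(n)},\bv_{(n)})$. Since the feasible sets of the two problems are literally the same, this implies that they share the same set of minimizers, which is the claimed equivalence. So the whole argument reduces to bookkeeping: expand the weighted sum-MSE, substitute the per-antenna decomposition \eqref{eq:expressHFF}, and collect the terms that involve the $n$-th block variable.

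First I would discard the $-\ln\det(\Wm_k)$ terms in \eqref{eq:PAoptObj}, which are constants here, and expand $\mathrm{Tr}(\Wm_k\Em_k)$ via \eqref{eq:Ek}. Writing $\Tm_{k,i}\triangleq\Hm_k^\mathsf{sel}\Fm_{\mathsf{sel}}\Fm_{\Dt,i}$ and merging the quadratic contributions from the two summands of $\Em_k$, one gets $\mathrm{Tr}(\Wm_k\Em_k)=\mathrm{Tr}(\Wm_k)+\sigma_k^2\mathrm{Tr}(\Wm_k\Um_k^\HH\Um_k)-2\mathrm{Re}\,\mathrm{Tr}(\Wm_k\Um_k^\HH\Tm_{k,k})+\sum_{i=1}^K\mathrm{Tr}(\Wm_k\Um_k^\HH\Tm_{k,i}\Tm_{k,i}^\HH\Um_k)$, where combining the two linear traces into a single real part uses $\Wm_k=\Wm_k^\HH$. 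The first two terms are constants and may be dropped.

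Next I would substitute \eqref{eq:expressHFF}, i.e., $\Tm_{k,i}=\sum_{m}\Hm_{(m),k}^\mathsf{sel}\bv_{(m)}\fv_{(m),i}^\HH$, into the remaining terms and isolate the $n$-th block. For the linear term, $-2\mathrm{Re}\sum_k\beta_k\mathrm{Tr}(\Wm_k\Um_k^\HH\Tm_{k,k})=-2\mathrm{Re}\sum_k\beta_k\sum_m\fv_{(m),k}^\HH\Wm_k\Um_k^\HH\Hm_{(m),k}^\mathsf{sel}\bv_{(m)}$ by the cyclic trace identity $\mathrm{Tr}(\uv\vv^\HH)=\vv^\HH\uv$; keeping only the $m=n$ summand and stacking over $k$ through the partition \eqref{eq:FDexpres} gives exactly $-2\mathrm{Re}(\fv_{(n)}^\HH\Dm_n\bv_{(n)})$ with $\Dm_n$ as in \eqref{eq:Dm}. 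For the quadratic term, a double-index expansion yields $\sum_k\beta_k\sum_i\mathrm{Tr}(\Wm_k\Um_k^\HH\Tm_{k,i}\Tm_{k,i}^\HH\Um_k)=\sum_{p,q}(\fv_{(q)}^\HH\fv_{(p)})(\bv_{(p)}^\TT\Bm_{pq}\bv_{(q)})$ with $\Bm_{pq}$ as in \eqref{eq:Bm}. Splitting according to whether $p$ and/or $q$ equals $n$: the $(p,q)=(n,n)$ term is $\|\fv_{(n)}\|_2^2\,\bv_{(n)}^\TT\Bm_{nn}\bv_{(n)}$; the group with exactly one index equal to $n$ contributes $\fv_{(n)}^\HH\Qm_n\bv_{(n)}$ and its complex conjugate---here one invokes that the $\bv$'s are real and that $\Bm_{pq}^\HH=\Bm_{qp}$ (itself a consequence of $\Wm_k=\Wm_k^\HH$), so this group sums to $2\mathrm{Re}(\fv_{(n)}^\HH\Qm_n\bv_{(n)})$ with $\Qm_n$ as in \eqref{eq:Qm}; and the group with neither index equal to $n$ is a constant. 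Putting the pieces together, the objective of \eqref{eq:optfb} equals $\|\fv_{(n)}\|_2^2\bv_{(n)}^\TT\Bm_{nn}\bv_{(n)}+2\mathrm{Re}(\fv_{(n)}^\HH(\Qm_n-\Dm_n)\bv_{(n)})$ plus an additive constant, which is precisely the objective of \eqref{eq:optfb2}; as the constraints are unchanged, equivalence follows.

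The main obstacle is the bookkeeping in this double-index expansion---specifically, correctly sorting the four groups of terms into ``depends on block $n$'' versus ``constant'', and verifying the conjugate symmetry of the two surviving cross-term groups so that they can be condensed into $2\mathrm{Re}(\fv_{(n)}^\HH\Qm_n\bv_{(n)})$. Once the symmetry identity $\Bm_{pq}^\HH=\Bm_{qp}$ is in hand, the rest is routine trace algebra (cyclic invariance and $\mathrm{Tr}(\uv\vv^\HH)=\vv^\HH\uv$), and no optimization-theoretic argument beyond ``objectives differing by a constant have the same minimizers'' is needed.
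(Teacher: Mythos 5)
Your proposal is correct and follows the same route as the paper's proof, which simply states that the result follows by substituting \eqref{eq:expressHFF} into \eqref{eq:Ek} and \eqref{eq:PAoptObj} and omitting constant terms; you have merely written out that substitution explicitly, and your bookkeeping (the combination of the two linear traces via $\Wm_k=\Wm_k^\HH$, the identification of $\Dm_n$ and $\Qm_n$, and the use of $\Bm_{pq}^\HH=\Bm_{qp}$ to condense the cross terms into $2\mathrm{Re}(\fv_{(n)}^\HH\Qm_n\bv_{(n)})$) is accurate.
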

\begin{proof}
	This can be proved by substituting~\eqref{eq:expressHFF} into~\eqref{eq:Ek} and~\eqref{eq:PAoptObj} while omitting all constant terms.
\end{proof}

Considering the state selection constraint in~\eqref{eq:bncons}, the optimization variable \( \bv_{(n)} \) admits only \( S \) possible configurations. Therefore, an enumeration approach can be employed. For each candidate \( \bv_{(n)} \), the corresponding optimal \( \fv_{(n)} \) can be derived in closed form. Hence, the global optimum of~\eqref{eq:optfb2} can be obtained by evaluating and comparing all closed-form solutions of \( \fv_{(n)} \) across the \( S \) configurations of \( \bv_{(n)} \).
\begin{proposition}\label{prop:1}
	The optimal solution of~\eqref{eq:optfb} can be determined by applying an exhaustive search to $\bv_{(n)} \in \left\{ \bv \in \{0, 1\}^S \mid \|\bv\|_0 = 1 \right\}$. Given a candidate $\bv_{(n)}$, the corresponding optimal $\fv_{(n)}$ is determined as
	\begin{multline}\label{eq:fn_opt}
		\fv_{(n)}^\mathsf{opt} = (\Dm_n-\Qm_n)\bv_{(n)}\\ \times\min\Big(\frac{1}{\bv_{(n)}^\TT\Bm_{nn}\bv_{(n)}},\frac{\sqrt{P_n}}{\|(\Qm_n-\Dm_n)\bv_{(n)}\|_2}\Big).
	\end{multline}
\end{proposition}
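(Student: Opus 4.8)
The plan is to build directly on Lemma~\ref{lemma:1}, which reduces the per-antenna subproblem~\eqref{eq:optfb} to the equivalent form~\eqref{eq:optfb2}. Since the constraint~\eqref{eq:bncons} forces $\bv_{(n)}$ to be one of the $S$ standard basis vectors of $\mathbb{R}^S$, I would split the minimization into an outer exhaustive search over these $S$ choices and, for each fixed $\bv_{(n)}$, an inner minimization over $\fv_{(n)}$ alone. The crux is therefore to solve the inner problem in closed form and to show its minimizer equals~\eqref{eq:fn_opt}.

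For a fixed $\bv_{(n)}$, I would introduce the shorthand $a \triangleq \bv_{(n)}^\TT \Bm_{nn} \bv_{(n)} \in \mathbb{R}$ and $\zv \triangleq (\Qm_n - \Dm_n)\bv_{(n)} \in \mathbb{C}^{D}$, so that the inner objective reads $a\|\fv_{(n)}\|_2^2 + 2\mathrm{Re}(\fv_{(n)}^\HH \zv)$ subject to $\|\fv_{(n)}\|_2^2 \le P_n$. A preliminary fact to record is $a \ge 0$: by~\eqref{eq:Bm}, $\Bm_{nn} = \sum_{k=1}^K \beta_k \Mm_k^\HH \Wm_k \Mm_k$ with $\Mm_k \triangleq \Um_k^\HH \Hm_{(n),k}^\mathsf{sel}$, and since each $\Wm_k \succ \mathbf{0}$ and $\beta_k > 0$, the matrix $\Bm_{nn}$ is positive semidefinite; as $\bv_{(n)}$ is real, $a = \bv_{(n)}^\HH \Bm_{nn}\bv_{(n)} \ge 0$.

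Next I would decouple the magnitude and direction of $\fv_{(n)}$. Writing $\|\fv_{(n)}\|_2 = r$, for fixed $r$ the term $2\mathrm{Re}(\fv_{(n)}^\HH \zv)$ is minimized by Cauchy--Schwarz when $\fv_{(n)}$ is antiparallel to $\zv$, i.e.\ $\fv_{(n)} = -r\,\zv/\|\zv\|_2$ for $\zv \neq \mathbf{0}$, yielding value $-2r\|\zv\|_2$ (when $\zv = \mathbf{0}$ the objective is $a r^2 \ge 0$ and $r=0$ is optimal, consistent with~\eqref{eq:fn_opt}). The inner problem then collapses to the scalar problem $\min_{0 \le r \le \sqrt{P_n}} \big(a r^2 - 2 r \|\zv\|_2\big)$, a convex quadratic on an interval whose constrained minimizer is $r^\mathsf{opt} = \min\big(\|\zv\|_2/a,\ \sqrt{P_n}\big)$, with the first argument read as $+\infty$ when $a = 0$. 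Substituting $r^\mathsf{opt}$ into $\fv_{(n)} = -r\,\zv/\|\zv\|_2$ and using $-\zv = (\Dm_n - \Qm_n)\bv_{(n)}$ produces the two collinear vectors $(\Dm_n - \Qm_n)\bv_{(n)}/\big(\bv_{(n)}^\TT\Bm_{nn}\bv_{(n)}\big)$ and $\sqrt{P_n}(\Dm_n - \Qm_n)\bv_{(n)}/\|(\Qm_n - \Dm_n)\bv_{(n)}\|_2$; the $\min(\cdot,\cdot)$ in~\eqref{eq:fn_opt} is precisely the selection of whichever of these two parallel vectors has the smaller $\ell_2$ norm, which reproduces $r^\mathsf{opt}(-\zv/\|\zv\|_2)$. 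Finally, comparing the attained objective values over the $S$ choices of $\bv_{(n)}$ gives the global optimum of~\eqref{eq:optfb2}, hence of~\eqref{eq:optfb} by Lemma~\ref{lemma:1}.

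The main obstacle I anticipate is not conceptual but bookkeeping at the degenerate cases: making precise the meaning of $\min(\cdot,\cdot)$ applied to two vectors (here they are always collinear, so it is simply the shorter one) and handling $a=0$ and $\zv=\mathbf{0}$ so that~\eqref{eq:fn_opt} stays well defined; beyond the Cauchy--Schwarz step and the elementary analysis of a one-dimensional convex quadratic on an interval, no further argument is needed.
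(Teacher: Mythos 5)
Your proposal is correct and follows essentially the same route as the paper's appendix proof: exhaustively enumerate the $S$ basis choices of $\bv_{(n)}$, show the optimal $\fv_{(n)}$ must be antiparallel to $(\Qm_n-\Dm_n)\bv_{(n)}$ (you via Cauchy--Schwarz, the paper via a short contradiction argument), and then solve the resulting one-dimensional convex quadratic on $[0,\sqrt{P_n}]$ to obtain~\eqref{eq:fn_opt}. Your explicit treatment of the degenerate cases $a=0$ and $\zv=\mathbf{0}$ is a minor refinement over the paper, which simply asserts $a>0$.
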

\begin{proof}
	See Appendix.
\end{proof}

{
\begin{remark}
	Since the adopted per-antenna optimization framework allows decoupling of the optimization of the precoders related to different antennas, the enumeration approach is applied to each $\bv_{(n)}$ independently. Hence, there are only $S$ possible configurations to consider at each search step. In practice, the number of states per reconfigurable antenna is typically modest (e.g., 64 in~\cite{Wang2025Electromagnetically}). Moreover, for each candidate antenna state, the optimal precoder vector is obtained in closed form~\eqref{eq:fn_opt}.  These make the overall computational complexity affordable. Nonetheless, this complexity can be further reduced by circumventing the search step altogether. One alternative is to first apply the Model~II-based algorithm, and then project the arbitrarily optimized pattern onto the candidate set in Model~I to directly select the most similar candidate. However, the effectiveness of this method largely depends on the availability of a large candidate set, which may not always be feasible. Another direction is to exploit learning-based methods. Due to space limitations, we leave these extensions for future work.
\end{remark}
}

\subsection{Decomposition of $\Fm_\Dt$ into $\Fm_{\mathsf{RF}}$ and $\Fm_{\mathsf{BB}}$}\label{sec:decomp}
When obtained \( \{ \fv_{(n)}^\mathsf{opt}, \bv_{(n)}^\mathsf{opt} \}_{n=1}^N \), we can recover $\Fm_\mathsf{sel}^\mathsf{opt}$ and $\Fm_\Dt^\mathsf{opt}$ according to~\eqref{eq:Fsel_b} and~\eqref{eq:FDexpres}, respectively. A final step is to decompose $\Fm_\Dt^\mathsf{opt}$ back into $\Fm_{\mathsf{RF}}^\mathsf{opt}$ and $\Fm_{\mathsf{BB}}^\mathsf{opt}$ by solving the following minimization problem:
\begin{subequations}\label{eq:FDdecomp}
	\begin{align}
	\min_{\Fm_{\mathsf{RF}},\Fm_{\mathsf{BB}}}&\ \|\Fm_\Dt^\mathsf{opt}-\Fm_\mathsf{RF}\Fm_\mathsf{BB}\|_\mathsf{F}^2\\
	\mathrm{s.t.\ \ }&\ |[\Fm_\mathsf{RF}]_{i,j}|^2 = 1/N, \ \forall i,j,\\
	&\  \big[\Fm_\mathsf{RF} \Fm_{\mathsf{BB}} \Fm_{\mathsf{BB}}^\HH \Fm_\mathsf{RF}^\HH\big]_{n,n} \leq P_n, \forall n.\label{eq:PAPC2}
\end{align}
\end{subequations}
Regardless of the per-antenna power constraints~\eqref{eq:PAPC2}, various methods have been proposed to address this decomposition problem (see, e.g.,~\cite{Yu2016Alternating,Huang2024Hybrid}). Based on this, a straightforward approach to solve~\eqref{eq:FDdecomp} is to first relax the per-antenna power constraint~\eqref{eq:PAPC2} to a total power constraint of the form \( \mathrm{Tr}\big(\Fm_\mathsf{RF} \Fm_\mathsf{BB} \Fm_\mathsf{BB}^\HH \Fm_\mathsf{RF}^\HH\big) \leq \mathrm{Tr}\big(\Fm_\mathsf{D}^\mathsf{opt} (\Fm_\mathsf{D}^\mathsf{opt})^\HH\big) \). Then, an existing decomposition algorithm (such as Algorithm~1 in~\cite{Huang2024Hybrid}) can be applied to obtain an initial solution \( \tilde{\Fm}_\mathsf{RF}^\mathsf{opt} \) and \( \tilde{\Fm}_\mathsf{BB}^\mathsf{opt} \). Next, to enforce the per-antenna power constraint~\eqref{eq:PAPC2}, a normalization step can be performed on the baseband precoder as follows:
\begin{equation}
	{\Fm}_\mathsf{BB}^\mathsf{opt} \!=\! \tilde{\Fm}_\mathsf{BB}^\mathsf{opt}\min_n\Bigg(1,\sqrt{\frac{P_n}{\big[\tilde{\Fm}_\mathsf{RF}^\mathsf{opt} \tilde{\Fm}_{\mathsf{BB}}^\mathsf{opt} (\tilde{\Fm}_{\mathsf{BB}}^\mathsf{opt})^\HH (\tilde{\Fm}_\mathsf{RF}^\mathsf{opt})^\HH\big]_{n,n}}}\Bigg)\!,
\end{equation}
while keeping \( \Fm_\mathsf{RF}^\mathsf{opt} = \tilde{\Fm}_\mathsf{RF}^\mathsf{opt} \) unchanged.

\subsection{Algorithm Summary and Complexity Analysis}

As a result, the proposed tri-hybrid precoding algorithm based on Model~I is summarized in Algorithm~\ref{algo:MI}. The computational complexity of Algorithm~\ref{algo:MI} is analyzed as follows. We consider a massive \ac{MIMO} system where \( N \gg \sum_{k=1}^K M_k > D > K \), and thus focus on the complexity with respect to the number of \ac{BS} antennas \(N\) and the number of available antenna states \(S\). The complexity of lines~4--6 is dominated by line~4, which has a complexity of \( \mathcal{O}(N^2S) \). The operation in line~8 executed over the loop in line~7 incurs a total complexity of \( \mathcal{O}(NS^2) \), while the updates in lines~9--12 over the loop in line~7 contribute a total complexity of \( \mathcal{O}(NS^3) \). Additionally, the decomposition method in line~17 has a complexity of \( \mathcal{O}(N_\mathsf{RF}^2N + N_\mathsf{RF}ND) \) when Algorithm~1 in~\cite{Huang2024Hybrid} is adopted. Since \( N_\mathsf{RF} > D \), the overall complexity of Algorithm~\ref{algo:MI} can be summarized as $\mathcal{O}\big(I_\mathsf{max}(N^2S + NS^3 + N_\mathsf{RF}^2N)\big)$, where \( I_\mathsf{max} \) denotes the maximum number of iterations.

\begin{algorithm}[t]
 \caption{Tri-Hybrid Precoding Based on Model~I}
 \label{algo:MI}
 \begin{algorithmic}[1]
 \State \textbf{Input:} $\Hm_k^\mathsf{sel},\sigma_k^2,\ \forall k$.\qquad \textbf{Output:} $\Fm_\mathsf{sel}^\mathsf{opt},\Fm_\mathsf{RF}^\mathsf{opt},\Fm_\mathsf{BB}^\mathsf{opt}$.
 \State Initialize $\Fm_\mathsf{sel},\Fm_\mathsf{RF},\Fm_\mathsf{BB}$, $\Fm_\Dt\!=\!\Fm_\mathsf{RF}\Fm_\mathsf{BB}$, $\Wm_k\!=\!\mathbf{I}$,  and the maximum iteration number $I_\mathsf{max}$.
 \For{$\mathrm{Iter}=1,2,\dots,I_\mathsf{max}$}
 \State Compute $\Hm_k=\Hm_k^\mathsf{sel}\Fm_{\mathsf{sel}},\ \forall k$.
 \State Update $\Um_k=\Um_k^\mathsf{opt},\ \forall k$, according to~\eqref{eq:Uopt}.
 \State Update $\Wm_k=\Wm_k^\mathsf{opt},\ \forall k$, according to~\eqref{eq:Wopt}.
 \For{$n=1,2,\dots,N$}
 \State Compute $\{\Bm_{nn},\Qm_{n},\Dm_{n}\}$ according to~\eqref{eq:Bm}--\eqref{eq:Dm}.
 \For{$s=1,2,\dots,S$}
 \State Let $\bv_{(n)}=\mathbf{0}_{S\times 1}$ and assign $[\bv_{(n)}]_s =1$.
 \State Compute the optimal $\fv_{(n)}^\mathsf{opt}$ given $\bv_{(n)}$ via~\eqref{eq:fn_opt}.
 \EndFor
 \State Determine the optimal $\bv_{(n)}^\mathsf{opt}$ and the corresponding $\fv_{(n)}^\mathsf{opt}$ that maximize the objective function in~\eqref{eq:optfb2}.
 \EndFor
 \State Recover $\Fm_\mathsf{sel}^\mathsf{opt}$ and $\Fm_\Dt^\mathsf{opt}$ based on \( \big\{ \fv_{(n)}^\mathsf{opt}, \bv_{(n)}^\mathsf{opt} \big\}_{n=1}^N \) using~\eqref{eq:Fsel_b} and~\eqref{eq:FDexpres}, respectively.
 %\State Stop if $\sum_k \beta_k \ln \det(\Wm_k) - \sum_k \beta_k \ln \det(\Wm_k^\mathsf{old})\leq \epsilon$.
 \EndFor 
 \State Decompose $\Fm_\Dt^\mathsf{opt}$ back into into $\Fm_{\mathsf{RF}}^\mathsf{opt}$ and $\Fm_{\mathsf{BB}}^\mathsf{opt}$ using the method described in Section~\ref{sec:decomp}.
 \end{algorithmic} 
\end{algorithm}

\section{Tri-Hybrid Precoding Based on Model~II}\label{sec:algoM2}

The tri-hybrid precoding based on Model~II can be addressed by slightly modifying the approach proposed in Section~\ref{sec:algoM1}. Specifically, we continue to apply the \ac{WMMSE} principle to reformulate the optimization problem {following the same steps as in Section~\ref{sec:ConverWSMSEM}}, but with a different constraint, as follows: 
\begin{subequations}\label{eq:WMMSE_M2}
	\begin{align}
		\min_{\Wm,\Um,\Fm_{\mathsf{cof}}, \Fm_\mathsf{D}}&\ \sum_{k=1}^K\beta_k \big(\mathrm{Tr}(\Wm_k\Em_k)-\ln\det(\Wm_k)\big)\label{eq:M2obj}\\
	\mathrm{s.t.\quad\ }&\ [\Fm_{\mathsf{D}}\Fm_{\mathsf{D}}^\HH]_{n,n}\leq P_n,\ \forall n, \\
	&\ \Fm_{\mathsf{cof}} = \mathrm{blkdiag}\big\{\cv_{(1)},\cv_{(2)},\dots,\cv_{(N)}\big\},\\
	&\ \|\cv_{(n)}\|_2^2 = 4\pi,\ \forall n. \label{eq:coeCons}
	\end{align}
\end{subequations}
where 
\begin{multline}\label{eq:Ek2}
	\Em_k \triangleq \big(\mathbf{I} - \Um_k^\HH\Hm_k^\mathsf{cof}\Fm_\mathsf{cof}\Fm_{\Dt,k}\big)\big(\mathbf{I} - \Um_k^\HH\Hm_k^\mathsf{cof}\Fm_\mathsf{cof}\Fm_{\Dt,k}\big)^\HH\\
	+ \Um_k^\HH\Big(\sum_{i\neq k} \Hm_k^\mathsf{cof}\Fm_\mathsf{cof}\Fm_{\Dt,i}\Fm_{\Dt,i}^\HH\Fm_{\mathsf{cof}}^\HH(\Hm_k^\mathsf{cof})^\HH+\sigma_k^2\mathbf{I}\Big)\Um_k.
\end{multline}
%Note that the constraint in~\eqref{eq:coeCons} becomes less stringent in this case, as analyzed in Section~\ref{sec:comparison}. 
This optimization problem can still be solved using the \ac{BCD} method, where the optimal solution of \( \{\Wm, \Um\} \) can still be obtained from~\eqref{eq:Uopt} and~\eqref{eq:Wopt}, given that \( \Hm_k = \Hm_k^\mathsf{cof} \Fm_\mathsf{cof} \). The key difference lies in the per-antenna optimization of \( \Fm_\mathsf{cof} \) and \( \Fm_\Dt \), as outlined in the following subsection.

\subsection{Per-Antenna Optimization of $\Fm_{\mathsf{cof}}$ and $\Fm_\mathsf{D}$}

{Similar to Section~\ref{sec:PAO_MI}, considering the per-antenna power constraint and the per-antenna pattern synthesis, we can naturally treat the variables related to each antenna as a block variable and update each antenna alternately. 
Following this per-antenna optimization framework, we now focus on optimizing \( \fv_{(n)},\cv_{(n)}\) to maximize~\eqref{eq:M2obj}:
\begin{equation}\label{eq:optfbM2}
\begin{aligned}
	\min_{\fv_{(n)}, \cv_{(n)}}&\ \sum_{k=1}^K\beta_k \big(\mathrm{Tr}(\Wm_k\Em_k)-\ln\det(\Wm_k)\big) \\
	\mathrm{s.t.\ \ }&\ \|\fv_{(n)}\|_2^2\leq P_n,\quad \|\cv_{(n)}\|_2^2 = 4\pi.
\end{aligned}
\end{equation}
Given the expression $\Hm_k^\mathsf{cof}\Fm_{\mathsf{cof}}\Fm_{\Dt,i} = \sum_{n=1}^N \Hm_{(n),k}^\mathsf{cof}\cv_{(n)}\fv_{(n),i}^\HH$, we can apply Lemma~\ref{lemma:1} and recast the per-antenna optimization as the following problem:
\begin{align}
	\min_{\fv_{(n)}, \cv_{(n)}}&\ \|\fv_{(n)}\|_2^2\cv_{(n)}^\TT\Bm_{nn}\cv_{(n)} + 2\mathrm{Re}\Big(\fv_{(n)}^\HH(\Qm_n-\Dm_n)\cv_{(n)}\Big) \notag \\
	\mathrm{s.t.\ \ }&\ \|\fv_{(n)}\|_2^2\leq P_n,\quad \|\cv_{(n)}\|_2^2 = 4\pi, \label{eq:optfb2_M2}
\end{align}}
where $\Bm_{nn}$, $\Qm_n$, and $\Dm_n$ can be computed via~\eqref{eq:Bm}--\eqref{eq:Dm} by replacing $\{\Hm_{(n),k}^\mathsf{sel},\bv_{(n)}\}$ with $\{\Hm_{(n),k}^\mathsf{cof},\cv_{(n)}\}$. Here, $\Hm_{(n),k}^\mathsf{cof}\in\mathbb{C}^{M_k\times T}$ is defined as a submatrix of $\Hm_{k}^\mathsf{cof}$ such that $\Hm_k^\mathsf{cof}\!=\![\Hm_{(1),k}^\mathsf{cof},\dots,\Hm_{(N),k}^\mathsf{cof}]\!\in\!\mathbb{C}^{M_k \times NT}$. Since $\cv_{(n)} \in \mathbb{R}^T$ has infinitely many possible configurations, exhaustive enumeration is not applicable. Therefore, we adopt the \ac{BCD} method again and alternately optimize $\fv_{(n)}$ and $\cv_{(n)}$. 

\subsubsection{Optimizing $\fv_{(n)}$ given $\cv_{(n)}$}
Using Proposition~\ref{prop:1}, the optimal $\fv_{(n)}$ for a fixed $\cv_{(n)}$ is given by:
\begin{multline}\label{eq:fn_opt2}
	\fv_{(n)}^\mathsf{opt} = 
	(\Dm_n - \Qm_n)\cv_{(n)}\\ \times\min\bigg(\frac{1}{\cv_{(n)}^\TT \Bm_{nn} \cv_{(n)}}, \frac{\sqrt{P_n}}{\|(\Qm_n - \Dm_n)\cv_{(n)}\|_2}\bigg).
\end{multline}

\subsubsection{Optimizing $\cv_{(n)}$ given $\fv_{(n)}$}
However, the optimization of $\cv_{(n)}$ given a fixed $\fv_{(n)}$, formulated as:
\begin{align}
	\min_{\cv_{(n)}}\quad & \|\fv_{(n)}\|_2^2 \cv_{(n)}^\TT \Bm_{nn} \cv_{(n)} + 2\,\mathrm{Re}\left\{ \fv_{(n)}^\HH (\Qm_n - \Dm_n)\cv_{(n)} \right\} \notag \\
		\text{s.t.} \quad & \|\cv_{(n)}\|_2^2 = 4\pi, \label{eq:MIIC3}
\end{align}
does not admit a closed-form solution.\footnote{Although the global optimum of~\eqref{eq:MIIC3} does not have a closed-form expression,~\cite{Zheng2025Tri} shows that closed-form solutions exist for two local optima.} 
Before delve into the solution of~\eqref{eq:MIIC3}, we reveal an implicit constraint involved. 

Recalling~\eqref{eq:Gn_thetaphi}, we note that there is an implicit constraint on $\cv_{(n)}$ that the synthesized radiation pattern needs to be strictly positive, as demonstrated in Remark~\ref{rmk:2}. However,~\eqref{eq:MIIC3} does not include this constraint, potentially leading to an optimized radiation pattern with non-positive values, which has no physical feasibility. By further examining~\eqref{eq:Gn_thetaphi}, we observe that the first entry in $\gammav(\theta,\phi)$ is the $0^\text{th}$-order spherical harmonics $Y_0^0(\theta,\phi)$, which is a sphere with a constant positive value across all $(\theta,\phi)$. Therefore, under the power constraint in~\eqref{eq:MIIC3}, we can always assign a sufficiently large coefficient to $Y_0^0(\theta,\phi)$ to ensure a strictly positive radiation pattern throughout the optimization process. To do so, we fixed the first entry in $\cv_{(n)}$ as $2\sqrt{\rho\pi}$ and partition 
\begin{equation}\label{eq:ImpliConst}
	\cv_{(n)}=\begin{bmatrix} 2\sqrt{\rho\pi} & 2\sqrt{(1-\rho)\pi}\tilde{\cv}_{(n)}^\TT \end{bmatrix}^\TT\in\mathbb{R}^{T},
\end{equation} 
where then $\tilde{\cv}_{(n)}\in\mathbb{R}^{T-1}$ is the variable we optimize with the power constraint converted as $\|\tilde{\cv}_{(n)}\|_2^2=1$. Here, $\rho\in(0,1]$ is a factor indicating the strength of the constant component in the synthesized radiation pattern.\footnote{As an empirical guideline, maintaining $\rho > 0.7$ generally ensures that the synthesized radiation pattern remains positive over the sphere.} A smaller value of $\rho$ stands for a higher degree of freedom afforded to $\cv_{(n)}$.

By substituting~\eqref{eq:ImpliConst} into \eqref{eq:MIIC3} and omit constant terms, one can further recast~\eqref{eq:MIIC3} as:
\begin{subequations}\label{eq:MIIopt4}
	\begin{align}
	\min_{\tilde{\cv}_{(n)}}\quad & \tilde{\cv}_{(n)}^\TT\tilde{\Bm}_{nn} \tilde{\cv}_{(n)} + (\vv_1+\vv_2)^\TT\tilde{\cv}_{(n)}  \\
		\text{s.t.} \quad & \|\tilde{\cv}_{(n)}\|_2^2 = 1, \label{eq:MIIC4}
\end{align}
\end{subequations}
where $\tilde{\Bm}_{nn}\!=\!4\pi(1\!-\!\rho)\|\fv_{(n)}\|_2^2[\Bm_{nn}]_{2:T,2:T}$, $\vv_1\!=\!4\sqrt{(1\!-\!\rho)\pi}\mathrm{Re}\{\fv_{(n)}^\HH[\Qm_n\!-\!\Dm_n]_{:,2:T}\}$, and~$\vv_2 \!=\! 8\pi\sqrt{\rho(1\!-\!\rho)}\|\fv_{(n)}\|_2^2\mathrm{Re}\{[\Bm_{nn}]_{2:T,1}\}$. We employ the Riemannian manifold optimization technique to solve~\eqref{eq:MIIopt4}. The detailed steps are presented in the next subsection.

\subsection{Solving~\eqref{eq:MIIopt4} Using Riemannian Manifold Optimization}\label{sec:RieOpt}

Riemannian manifold optimization is a method for solving constrained problems by exploiting the geometry of the constraint set. Instead of operating in the full Euclidean space, it performs optimization directly on smooth manifolds where the constraints naturally define the search space~\cite{Boumal2023Introduction,Douik2020Precise,Zheng2024LEO}. In our case, the constraint~\eqref{eq:MIIC4} corresponds to a unit 2-norm sphere. We define the unit 2-norm manifold as 
\begin{equation}
	\mathbb{M} \triangleq \big\{\xv\in\mathbb{R}^{T-1}\ \big|\ \|\xv\|_2^2=1 \big\}.
\end{equation}
Next, we can rewrite~\eqref{eq:MIIopt4} in a unconstrained form as
\begin{align}\label{eq:RieOpt}
	\min_{\tilde{\cv}_{(n)}\in\mathbb{M}}\quad & \tilde{\cv}_{(n)}^\TT\tilde{\Bm}_{nn} \tilde{\cv}_{(n)} + (\vv_1+\vv_2)^\TT\tilde{\cv}_{(n)}.
\end{align}

Similar to the gradient-based algorithms in Euclidean space, optimization over a Riemannian manifold is implemented by using the \emph{Riemannian gradient}~\cite{Absil2009Optimization,Boumal2023Introduction,manopt}. We denote the objective function in~\eqref{eq:RieOpt} as $g\big(\tilde{\cv}_{(n)}\big)$. At iteration $j$, given the previous candidate variable~$\tilde{\cv}_{(n)}^{[j-1]}$, we first compute the Euclidean gradient as 
\begin{equation}
	\nabla_{\tilde{\cv}_{(n)}^{[j-1]}} g = (\tilde{\Bm}_{nn}+\tilde{\Bm}_{nn}^\TT)\tilde{\cv}_{(n)}^{[j-1]} +  \vv_1+\vv_2.
\end{equation}
Next, the Riemannian gradient is obtained by projecting the calculated Euclidean gradient onto the tangent space at point $\tilde{\cv}_{(n)}^{[j-1]}$ of the manifold $\mathbb{M}$. Mathematically, the tangent space at a point $\tilde{\cv}_{(n)}\in\mathbb{M}$ is defined as
\begin{equation}
	\mathcal{T}_{\tilde{\cv}_{(n)}}\mathbb{M} \triangleq\big\{\xv\in\mathbb{R}^{T-1}\ |\ \xv^\TT\tilde{\cv}_{(n)} = 0 \},
\end{equation}
and the projection operation is given by
\begin{equation}
	\mathrm{Proj}_{\tilde{\cv}_{(\!n\!)}}\!\big(\nabla_{\tilde{\cv}_{(n)}} g\big)  = \nabla _{\tilde{\cv}_{(n)}}g-\big(\tilde{\cv}_{(n)}^\TT\!\nabla_{\tilde{\cv}_{(n)}} g\big)\tilde{\cv}_{(n)}.
\end{equation}
We then update the optimization variable in the Riemannian gradient direction and finally retract it from the tangent space $\mathcal{T}_{\tilde{\cv}_{(n)}^{[j-1]}}\mathbb{M}$ onto the manifold~$\mathbb{M}$, thus obtaining new candidate variable~$\tilde{\cv}_{(n)}^{[j]}$. These update and retraction operations in unit 2-norm sphere can be implemented as
\begin{equation}
	\tilde{\cv}_{(n)}^{[j]} = \frac{\tilde{\cv}_{(n)}^{[j-1]}-\varepsilon \mathrm{Proj}_{\tilde{\cv}_{(n)}^{[j-1]}}\Big(\nabla_{\tilde{\cv}_{(n)}^{[j-1]}} g\Big)}{\Big\|\tilde{\cv}_{(n)}^{[j-1]}-\varepsilon \mathrm{Proj}_{\tilde{\cv}_{(n)}^{[j-1]}}\Big(\nabla_{\tilde{\cv}_{(n)}^{[j-1]}} g\Big)\Big\|_2},
\end{equation}
where $\varepsilon\in\mathbb{R}_+$ is the step size. This update procedure ensures that the iterations always proceed within the manifold~$\mathbb{M}$. 

\subsection{Algorithm Summary and Complexity Analysis}

\begin{algorithm}[t]
 \caption{Tri-Hybrid Precoding Based on Model~II}
 \label{algo:MII}
 \begin{algorithmic}[1]
 \State \textbf{Input:} $\Hm_k^\mathsf{cof},\sigma_k^2,\ \forall k$.\qquad \textbf{Output:} $\Fm_\mathsf{cof}^\mathsf{opt},\Fm_\mathsf{RF}^\mathsf{opt},\Fm_\mathsf{BB}^\mathsf{opt}$.
 \State Initialize $\Fm_\mathsf{cof},\Fm_\mathsf{RF},\Fm_\mathsf{BB}$, $\Fm_\Dt\!=\!\Fm_\mathsf{RF}\Fm_\mathsf{BB}$, $\Wm_k\!=\!\mathbf{I}$,  and the maximum iteration number $I_\mathsf{max}$.
 \For{$\mathrm{Iter}=1,2,\dots,I_\mathsf{max}$}
 \State Compute $\Hm_k=\Hm_k^\mathsf{cof}\Fm_{\mathsf{cof}},\ \forall k$.
 \State Update $\Um_k=\Um_k^\mathsf{opt},\ \forall k$, according to~\eqref{eq:Uopt}.
 \State Update $\Wm_k=\Wm_k^\mathsf{opt},\ \forall k$, according to~\eqref{eq:Wopt}.
 \For{$n=1,2,\dots,N$}
 \State Compute $\{\Bm_{nn},\Qm_{n},\Dm_{n}\}$ via~\eqref{eq:Bm}--\eqref{eq:Dm} by replacing $\{\Hm_{(n),k}^\mathsf{sel},\bv_{(n)}\}$ with $\{\Hm_{(n),k}^\mathsf{cof},\cv_{(n)}\}$.
 \State Compute the optimal $\fv_{(n)}^\mathsf{opt}$ given $\cv_{(n)}$ via~\eqref{eq:fn_opt2}.
 \State Obtain $\cv_{(n)}^\mathsf{opt}$ given $\fv_{(n)}^\mathsf{opt}$ using the approach described in Section~\ref{sec:RieOpt}.
 \EndFor
 \State Recover $\Fm_\mathsf{cof}^\mathsf{opt}$ and $\Fm_\Dt^\mathsf{opt}$ based on \( \big\{ \fv_{(n)}^\mathsf{opt}, \cv_{(n)}^\mathsf{opt} \big\}_{n=1}^N \) using~\eqref{eq:Fcof_c} and~\eqref{eq:FDexpres}, respectively.
 %\State Stop if $\sum_k \beta_k \ln \det(\Wm_k) - \sum_k \beta_k \ln \det(\Wm_k^\mathsf{old})\leq \epsilon$.
 \EndFor 
 \State Decompose $\Fm_\Dt^\mathsf{opt}$ back into into $\Fm_{\mathsf{RF}}^\mathsf{opt}$ and $\Fm_{\mathsf{BB}}^\mathsf{opt}$ using the method described in Section~\ref{sec:decomp}.
 \end{algorithmic} 
\end{algorithm}

As a result, the proposed tri-hybrid precoding algorithm based on Model~II is summarized in Algorithm~\ref{algo:MII}. The computational complexity of Algorithm~\ref{algo:MI} is analyzed as follows. Under the same assumption that \( N \gg \sum_{k=1}^K M_k > D > K \), we focus on the complexity with respect to the number of \ac{BS} antennas \(N\) and the spherical harmonics truncation length \(T\). The complexity of lines~4--6 is dominated by line~4, which has a complexity of \( \mathcal{O}(N^2T) \). Over the loop in line~7, the operation in line~8 incurs a total complexity of \( \mathcal{O}(NT^2) \), the updates in lines~9 contribute a total complexity of \( \mathcal{O}(NT^2) \), and the Riemannian manifold optimization method in line~10 incurs a total complexity of \( \mathcal{O}(NT^2) \). Additionally, the decomposition method in line~17 has a complexity of \( \mathcal{O}(N_\mathsf{RF}^2N) \) when Algorithm~1 in~\cite{Huang2024Hybrid} is adopted. Therefore, the overall complexity of Algorithm~\ref{algo:MII} can be summarized as $\mathcal{O}\big(I_\mathsf{max}(N^2T + NT^2 + N_\mathsf{RF}^2N)\big)$.

\section{Simulation Results}\label{sec:sims}

This section presents the simulation results to show the performance of the considered tri-hybrid architecture and the proposed precoding algorithms.

\subsection{Simulation Setup}

To assess system performance in realistic scenarios, we generate the multi-user channels using ray-tracing simulation, as illustrated in Fig.~\ref{fig_RT}. The \ac{BS} is equipped with a $10 \times 10$ half-wavelength spaced antenna array. Three users are deployed in the scene, each configured with a $2 \times 2$ half-wavelength spaced antenna array. The ray-tracing simulations are carried out using Sionna RT in an environment based on the area around the Frauenkirche in Munich, Germany~\cite{sionna}. To accurately capture spatial characteristics, we perform per-antenna ray tracing, allowing the channel parameters for each antenna element to be independently simulated. The simulation includes both \ac{LoS} paths and \ac{NLoS} paths with up to triple-bounce reflections, as shown in Fig.~\ref{fig_RT}. From these simulations, we extract the complex channel gains, propagation delays, \acp{AoD}, and \acp{AoA} for each per-antenna link. These parameters are then used to construct the wireless channel according to~\eqref{eq:genChan}, and to compute the effective channels as defined in~\eqref{eq:Heff1} and~\eqref{eq:Heff2}. {Our simulations begin with perfect channel state information (CSI), while the evaluation under imperfect CSI will be incorporated in a future revised version of this manuscript due to space limitations.}

\begin{figure}[t]
  \centering
  \includegraphics[width=0.92\linewidth]{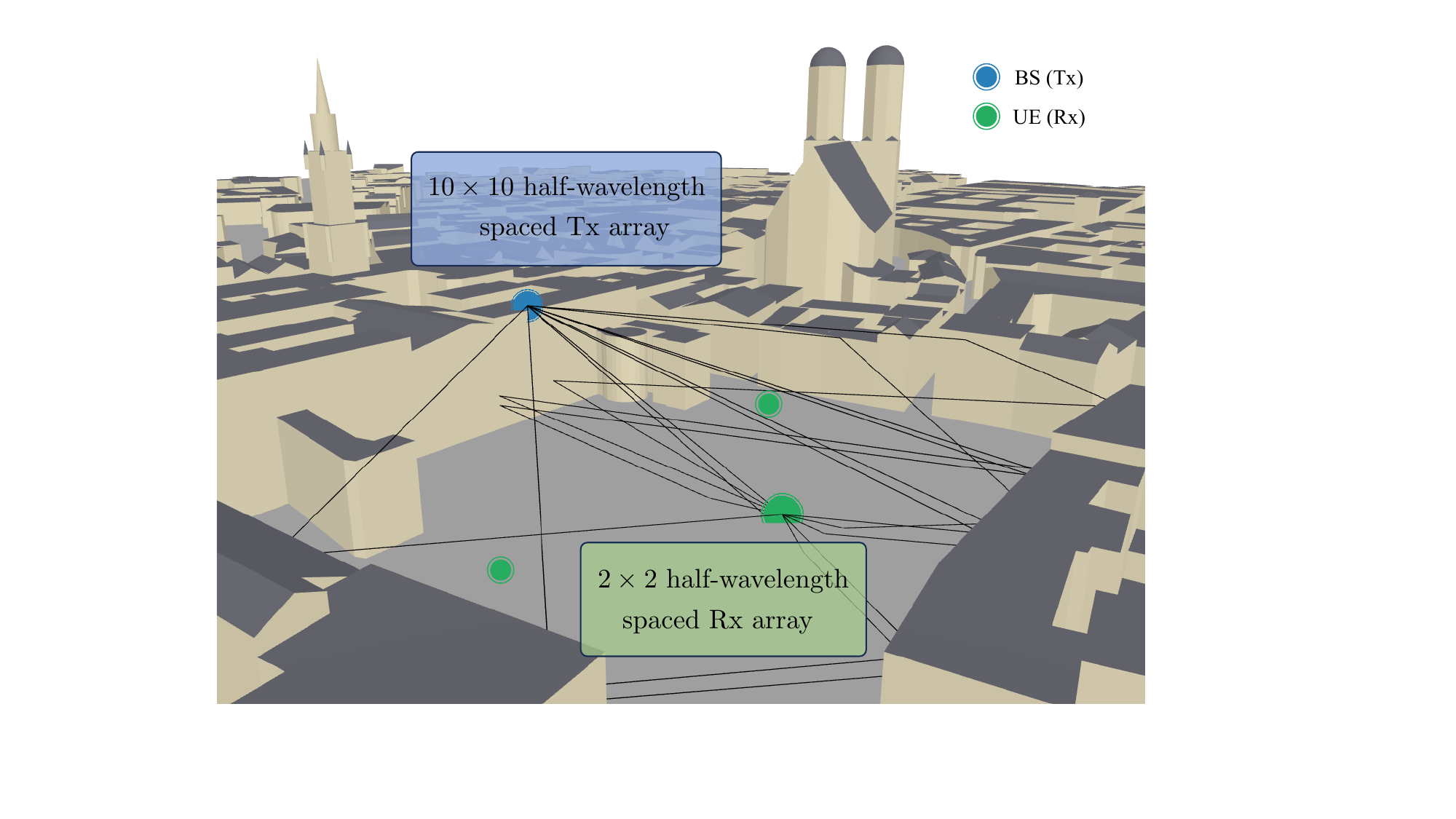}
  \vspace{-0.5em}
 \caption{
Ray-tracing results generated using Sionna RT in an example scene set in Munich, Germany. The scenario consists of a BS with a $10 \times 10$ antenna array and three UEs, each equipped with a $2 \times 2$ antenna array. Both the BS and users are configured with half-wavelength spacing. For clarity, ray-tracing results are depicted for one representative UE only.}
  \label{fig_RT}
  \vspace{-1em}
\end{figure}

Unless otherwise specified, the system parameters are set as follows. The carrier frequency is set to $\unit[30]{GHz}$ in the mmWave band. Each user is allocated two data streams, i.e., $D_k = 2,\ \forall k$. The number of \ac{RF} chains at the \ac{BS} is set to $N_\mathsf{RF} = D + 3 = \big(\sum_{k=1}^K D_k\big) + 3=9$. The per-antenna power budget at the \ac{BS} is $P_n = \unit[0]{dBm},\ \forall n$, and the noise power at each user is $\sigma_k^2 = \unit[-90]{dBm},\ \forall k$. Equal weights are assigned to the three users, i.e., $\beta_1 = \beta_2 = \beta_3 = 1/3$. For Model~I, we use the 64 available radiation patterns from the real pattern-reconfigurable antenna prototype in~\cite{Wang2025Electromagnetically}. For Model~II, antenna patterns are arbitrarily optimized based on spherical harmonics decomposition, with the $0^\text{th}$-order component fixed at $\rho = 0.7$ and a truncation length of $T = 49$. Moreover, when evaluating the conventional hybrid \ac{MIMO} benchmark, the reconfigurability of the antenna radiation patterns is disabled, and all antennas are assigned a fixed radiation pattern. This fixed pattern corresponds to State~2 in Fig.~\ref{fig_trihybrid}.

\subsection{Performance Evaluation}

\begin{figure*}[t]
  \centering
  \includegraphics[width=0.6\linewidth]{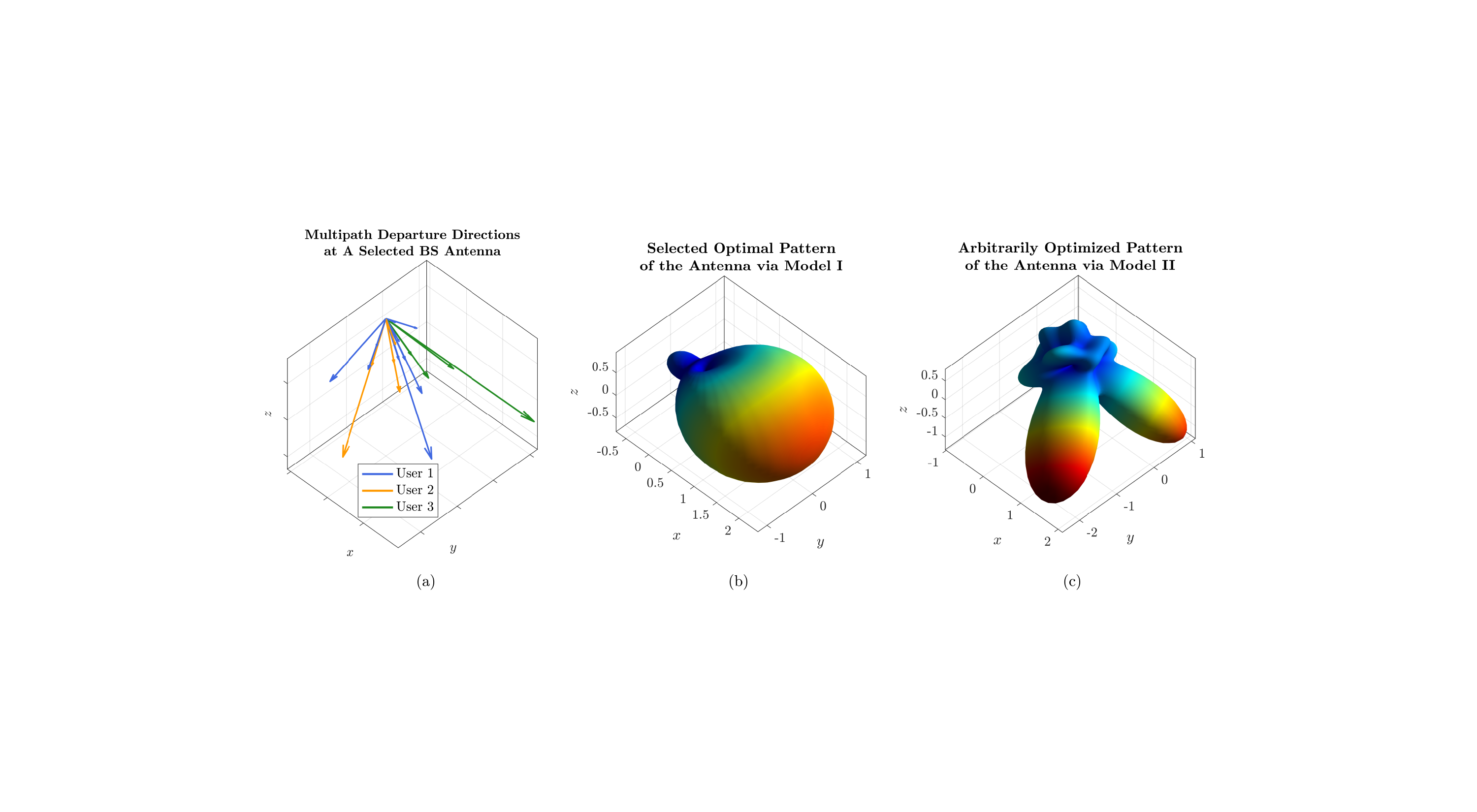}
  \vspace{-0.7em}
  \caption{
Visualization of the multipath geometry and the optimized antenna radiation patterns for a selected antenna element at the \ac{BS}, under the two models. (a) Multipath departure directions from the selected antenna to the three users, where the length of each arrow is inversely proportional to the path loss, i.e., proportional to the signal strength. (b) Optimized radiation pattern based on Model~I. (c) Optimized radiation pattern based on Model~II.
}
  \label{fig_sim03}
  \vspace{-1em}
\end{figure*}

\subsubsection{Antenna Radiation Pattern Visualization}
In Fig.~\ref{fig_sim03}-(a), we visualize the multipath departure directions from a selected \ac{BS} antenna to the three users. The length of each arrow is inversely proportional to the path loss, i.e., proportional to the signal strength. Figures~\ref{fig_sim03}-(b) and \ref{fig_sim03}-(c) plot the optimized radiation patterns at this antenna based on Model~I and Model~II, respectively. In Model~I, the proposed algorithm successfully selects the optimal candidate pattern that directs power toward the dominant multipath directions. However, this approach is limited by the constrained degrees of freedom inherent in real antenna designs. In contrast, the arbitrary optimization results under Model~II yield a more tailored radiation pattern that more precisely steers power toward the strongest multipath directions while also accounts for weaker paths. From an intuitive perspective, the proposed algorithms based on both models return reasonable and meaningful results.

\begin{figure}[t]
    \centering
    \include{figs/fig_sim03B.tex}
    \vspace{-3em}
    \caption{Visualization of the array beampattern under different antenna setups. The antenna gains are normalized with respect to their maximum value. {Here, $R_1$, $R_2$, and $R_3$ denote the achieved rate for the corresponding users.}}
    \label{fig_sim03B}
    \vspace{-2em}
\end{figure}

\subsubsection{Array Beampattern Visualization}

\begin{figure}[t]
    \centering
    % This file was created by matlab2tikz.
%
%The latest updates can be retrieved from
%  http://www.mathworks.com/matlabcentral/fileexchange/22022-matlab2tikz-matlab2tikz
%where you can also make suggestions and rate matlab2tikz.
%
\definecolor{mycolor1}{rgb}{0.13333,0.54510,0.13333}%
\definecolor{mycolor2}{rgb}{0.85490,0.64706,0.12549}%
\definecolor{mycolor3}{rgb}{1.00000,0.49804,0.05490}%
\definecolor{mycolor4}{rgb}{0.25490,0.41176,0.88235}%
\definecolor{mycolor5}{rgb}{0.60000,0.19608,0.80000}%
\begin{tikzpicture}

\begin{axis}[%
width=2.9in,
height=1.6in,
at={(0in,0in)},
scale only axis,
xmin=-20.5,
xmax=10.5,
xlabel style={font=\color{white!15!black},font=\footnotesize},
xticklabel style = {font=\color{white!15!black},font=\footnotesize},
xlabel={BS Per-Antenna Power Budget $P_n$ (dBm)},
ymin=-0.4,
ymax=20,
ylabel style={font=\color{white!15!black},font=\footnotesize},
yticklabel style = {font=\color{white!15!black},font=\footnotesize},
ylabel={Weighted Sum-Rate (bps/Hz)},
axis background/.style={fill=white},
xmajorgrids,
ymajorgrids,
grid style={dashed},
legend columns=2,
legend style={
    at={(0.5,1.03)},
    anchor=south,
    font=\tiny,
    legend cell align=left,
    align=left,
    draw=white!15!black,
    fill opacity=0.85
  }
]
\addplot [color=mycolor1, dashed, line width=1pt, mark=triangle*, mark options={solid, mycolor1}, mark size=1.6pt]
  table[row sep=crcr]{%
-20	3.47426882816491\\
-15	5.02970325388622\\
-10	6.95353901140211\\
-5	9.51955347407958\\
0	13.0827188763107\\
5	15.9259377701614\\
10	18.7062761688731\\
};
\addlegendentry{$\mathbf{F}_\mathsf{D}^\mathsf{opt},\mathbf{F}_\mathsf{cof}^\mathsf{opt}$ (Model II)}

\addplot [color=mycolor1, line width=1pt, mark=triangle*, mark options={solid, mycolor1}, mark size=1.6pt]
  table[row sep=crcr]{%
-20	3.012579732559\\
-15	4.9663274854079\\
-10	6.94611042633265\\
-5	9.44760914189883\\
0	12.5286907728677\\
5	15.4149743366269\\
10	17.9090771716863\\
};
\addlegendentry{$\mathbf{F}_\mathsf{BB}^\mathsf{opt},\mathbf{F}_\mathsf{RF}^\mathsf{opt},\mathbf{F}_\mathsf{cof}^\mathsf{opt}$ (Model II)}

\addplot [color=mycolor2, dashed, line width=1pt, mark=diamond*, mark options={solid, rotate=90, mycolor2}, mark size=1.7pt]
  table[row sep=crcr]{%
-20	3.21183394692696\\
-15	4.77427092219899\\
-10	6.53710836740263\\
-5	8.47709297693073\\
0	10.2966571243027\\
5	12.5121332099337\\
10	14.6651395528531\\
};
\addlegendentry{$\mathbf{F}_\mathsf{D}^\mathsf{opt},\mathbf{F}_\mathsf{sel}^\mathsf{opt}$ (Model I, fictitious)}

\addplot [color=mycolor2, line width=1pt, mark=diamond*, mark options={solid, rotate=90, mycolor2}, mark size=1.7pt]
  table[row sep=crcr]{%
-20	2.66012993748375\\
-15	4.40511061161033\\
-10	6.51430115858512\\
-5	8.45107117433742\\
0	10.1288998831463\\
5	12.3562770135309\\
10	14.2098637006697\\
};
\addlegendentry{$\mathbf{F}_\mathsf{BB}^\mathsf{opt},\mathbf{F}_\mathsf{RF}^\mathsf{opt},\mathbf{F}_\mathsf{sel}^\mathsf{opt}$ (Model I, fictitious)}

\addplot [color=mycolor3, dashed, line width=1pt, mark=*, mark options={solid, mycolor3}, mark size=1.3pt]
  table[row sep=crcr]{%
-20	2.10403961707919\\
-15	3.45683763166029\\
-10	5.19159301630005\\
-5	7.21250347307435\\
0	8.68464431460263\\
5	10.8297650765822\\
10	13.0469378084869\\
};
\addlegendentry{$\mathbf{F}_\mathsf{D}^\mathsf{opt},\mathbf{F}_\mathsf{sel}^\mathsf{opt}$ (Model I, hardware)}

\addplot [color=mycolor3, line width=1pt, mark=*, mark options={solid, mycolor3}, mark size=1.3pt]
  table[row sep=crcr]{%
-20	2.10292628292906\\
-15	3.43503376972166\\
-10	5.16857352109015\\
-5	7.18996476693722\\
0	8.59920502511259\\
5	10.6900215983374\\
10	12.7866968182182\\
};
\addlegendentry{$\mathbf{F}_\mathsf{BB}^\mathsf{opt},\mathbf{F}_\mathsf{RF}^\mathsf{opt},\mathbf{F}_\mathsf{sel}^\mathsf{opt}$ (Model I, hardware)}

\addplot [color=mycolor4, dashed, line width=1pt, mark=+, mark options={solid, mycolor4}, mark size=2.3pt]
  table[row sep=crcr]{%
-20	1.90914374814614\\
-15	3.17385386521425\\
-10	4.84013954885365\\
-5	6.80887913592263\\
0	7.92781513665756\\
5	10.0677119018245\\
10	12.4252474693253\\
};
\addlegendentry{$\mathbf{F}_\mathsf{D}^\mathsf{opt}$ (WMMSE) + fixed antennas}

\addplot [color=mycolor4, line width=1pt, mark=+, mark options={solid, mycolor4}, mark size=2.3pt]
  table[row sep=crcr]{%
-20	1.90654914599952\\
-15	3.13484068889517\\
-10	4.8147970356933\\
-5	6.77599980141731\\
0	7.81535828047099\\
5	9.8145839499693\\
10	12.1098754169075\\
};
\addlegendentry{$\mathbf{F}_\mathsf{BB}^\mathsf{opt},\mathbf{F}_\mathsf{RF}^\mathsf{opt}$ (WMMSE) + fixed antennas}

\addplot [color=mycolor5, dashed, line width=1pt, mark=x, mark options={solid, mycolor5}, mark size=2.3pt]
  table[row sep=crcr]{%
-20	0.330808888172056\\
-15	0.884265163070366\\
-10	1.70445212054909\\
-5	2.72232865890255\\
0	4.05016055218815\\
5	6.36552190631533\\
10	8.19550332071532\\
};
\addlegendentry{$\mathbf{F}_\mathsf{D}^\mathsf{opt}$ (ZF) + fixed antennas}

\addplot [color=mycolor5, line width=1pt, mark=x, mark options={solid, mycolor5}, mark size=2.3pt]
  table[row sep=crcr]{%
-20	0.0656453515109489\\
-15	0.2494002520509\\
-10	0.66467876668446\\
-5	1.37711412889063\\
0	2.26094552915706\\
5	3.89171500539129\\
10	5.75857767451646\\
};
\addlegendentry{$\mathbf{F}_\mathsf{BB}^\mathsf{opt},\mathbf{F}_\mathsf{RF}^\mathsf{opt}$ (ZF) + fixed antennas}

\end{axis}
\end{tikzpicture}%
    \vspace{-3em}
    \caption{Evaluation of weighted sum-rate versus \ac{BS} per-antenna power budget.}
    \label{fig_sim04}
\end{figure}
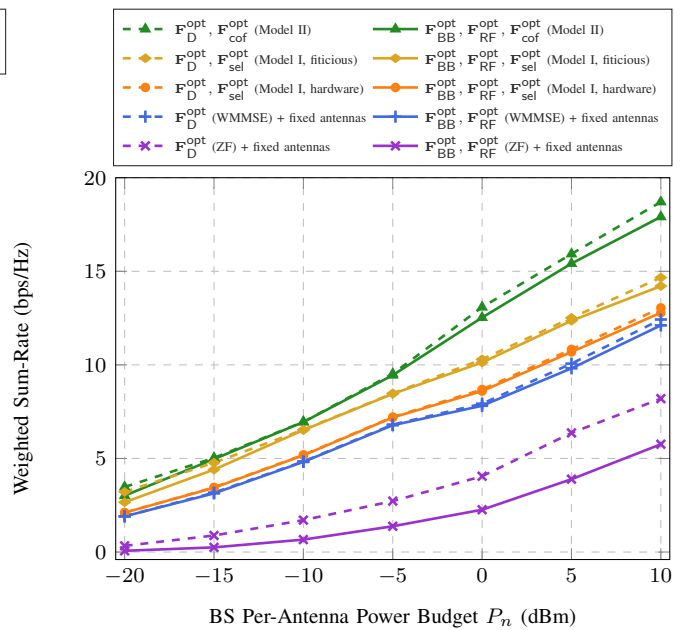

Subsequently, we evaluate the array beampatterns optimized under different antenna configurations. Following the formulation in~\cite[Eq.~(9)--(11)]{Zheng2024Mutual}, the array beampattern for the $k^\text{th}$ user is computed as
$
	E_k(\theta,\phi) = \|\rv^\TT(\theta,\phi)\Fm_\mathsf{RF}\Fm_{\mathsf{BB},k}\|_2, 
$
where $\rv(\theta,\phi) \in \mathbb{C}^N$ denotes the array manifold vector. Specifically, the $n^\text{th}$ entry of $\rv(\theta,\phi)$ is given by $r_n(\theta,\phi)=G_{(n)}^\mathsf{BS}(\theta,\phi)e^{j\frac{2\pi}{\lambda}\pv_n^\TT\uv(\theta,\phi)}$, where $\pv_n$ denotes the position of the $n^\text{th}$ antenna in the BS's body coordinate system, and $\uv(\theta,\phi)$ is a unit directional vector defined as $\uv(\theta,\phi)=[\sin\theta\cos\phi,\sin\theta\sin\phi,\cos\theta]^\TT$. In the three subfigures of Fig.~\ref{fig_sim03B}, the BS antenna radiation patterns $G_{(n)}^\mathsf{BS}(\theta,\phi)$ (for all $n$) are configured as fixed, optimized using Model~I, and optimized using Model~II, respectively. To offer a intuitive \ac{2D} illustration, Fig.~\ref{fig_sim03B} plots the envelope of the beampattern over the azimuth angle $\phi$, which is obtained as $\tilde{E}_k(\phi) = \max_\theta E_k(\theta,\phi)$. 

While the multi-user precoding objective cannot be directly visualized due to inter-user interference, some intuitive insights can still be gleaned from the figure. In general, the additional degrees of freedom provided by reconfigurable antenna radiation patterns allow the array beampattern to allocate power more effectively toward the dominant multipath directions. For instance, in the fixed-antenna case, the system fails to leverage the \ac{LoS} paths to User~2 and User~3, resulting in their normalized gains being lower than $\unit[-10]{dB}$. In contrast, under the reconfigurable cases, particularly Model~II, these two \ac{LoS} paths receive significantly more power, leading to a substantially higher weighted sum-rate for the system. In addition, the limitations of the proposed precoding algorithm under the \ac{WMMSE} framework are also evident here. For example, in all three cases, the beampatterns for User~1 and User~3 fail to direct their main lobes toward reasonable directions.  This suboptimality arises from the \ac{BCD} routine used in our optimization, which cannot guarantee global optimality. Thus, there still remains potential for further improvement from the algorithmic perspective.

\subsubsection{Weighted Sum-Rate versus Transmit Power}
Next, we evaluate the performance gain offered by the proposed tri-hybrid \ac{MIMO} architecture in comparison to the conventional hybrid \ac{MIMO} design. Figure~\ref{fig_sim04} demonstrates the weighted sum-rate as a function of the BS per-antenna power budget across different system architectures. Alongside the tri-hybrid architecture optimized under both Model~I and Model~II, we also include the conventional hybrid architecture with fixed antennas, optimized using two benchmark methods: the \ac{WMMSE} precoding algorithm~\cite{Zhao2023Rethinking} and the \ac{ZF} algorithm~\cite{Spencer2004Zero}, all under the same per-antenna power constraint. {The WMMSE algorithm jointly optimizes the analog and digital precoders and is widely regarded as a near-optimal approach, while the ZF algorithm aims to cancel mutual interference.} Note that under Model~I, besides the set of 64 real antenna patterns in~\cite{Wang2025Electromagnetically} (denoted as ``Model~I, hardware''), we further add a benchmark set of 64 fictitious antenna patterns (denoted as ``Model~I, fictitious''). These fictitious patterns are set as Gaussian beams steered toward 64 directions uniformly distributed within $\theta \in [90^\circ, 180^\circ]$ and $\phi \in [-90^\circ, 90^\circ]$, each with a 3~dB beamwidth of $85^\circ$. Additionally, we present both the fully digital solutions and those obtained by decomposing the digital precoder back into digital and hybrid components. The results indicate that the decomposition method proposed in Section~\ref{sec:decomp} effectively preserves the overall weighted sum-rate, with only minor performance degradation for the tri-hybrid solutions and the WMMSE-based hybrid solution. Regarding the comparison between system architectures, the tri-hybrid architecture consistently outperforms the conventional hybrid design; however, the Model~I-based solution based on real hardware provides only a minor gain of up to $\unit[0.9]{bps/Hz}$, while the Model~II-based solution delivers a much more substantial improvement of up to $\unit[5.8]{bps/Hz}$. This highlights the limited benefit of the existing pattern-reconfigurable antenna hardware in~\cite{Wang2025Electromagnetically} due to its restricted flexibility. Nonetheless, the significant performance gain observed under Model~II emphasizes the strong potential of the tri-hybrid architecture with future advancements in reconfigurable antenna design and fabrication. Moreover, the performance of Model~I with fictitious patterns suggests that arbitrary pattern generation may not be strictly necessary. Substantial gains can also be achieved by designing a more effective set of limited patterns, which calls for more flexible antenna designs. A promising strategy for this design is to use Model~II as a guiding reference for optimizing the selection/design of practical pattern sets.

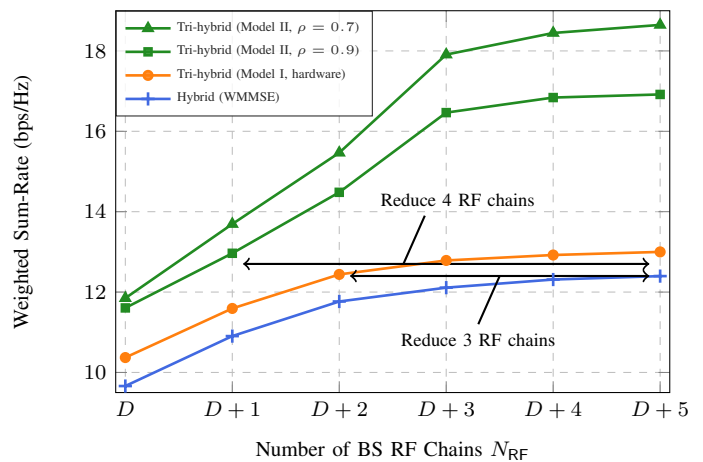
\begin{figure}[t]
    \centering
    % This file was created by matlab2tikz.
%
%The latest updates can be retrieved from
%  http://www.mathworks.com/matlabcentral/fileexchange/22022-matlab2tikz-matlab2tikz
%where you can also make suggestions and rate matlab2tikz.
%
\definecolor{mycolor1}{rgb}{0.13333,0.54510,0.13333}%
\definecolor{mycolor2}{rgb}{0.85490,0.64706,0.12549}%
\definecolor{mycolor3}{rgb}{1.00000,0.49804,0.05490}%
\definecolor{mycolor4}{rgb}{0.25490,0.41176,0.88235}%
\definecolor{mycolor5}{rgb}{0.60000,0.19608,0.80000}%
\begin{tikzpicture}

\begin{axis}[%
width=2.9in,
height=1.7in,
at={(0in,0in)},
scale only axis,
xmin=-0.09,
xmax=5.09,
xlabel style={font=\color{white!15!black},font=\footnotesize},
xticklabel style = {font=\color{white!15!black},font=\footnotesize},
xtick = {0,1,2,3,4,5},
xticklabels = {$D$, $D+1$, $D+2$, $D+3$, $D+4$, $D+5$},
xlabel={Number of BS \ac{RF} Chains $N_\mathsf{RF}$},
ymin=9.5,
ymax=19,
ylabel style={font=\color{white!15!black},font=\footnotesize},
yticklabel style = {font=\color{white!15!black},font=\footnotesize},
ylabel={Weighted Sum-Rate (bps/Hz)},
axis background/.style={fill=white},
xmajorgrids,
ymajorgrids,
grid style={dashed},
legend style={
    at={(0,1)},
    anchor=north west,
    font=\tiny,
    legend cell align=left,
    align=left,
    draw=white!15!black,
    fill opacity=0.85
  }
]
\addplot [color=mycolor1, line width=1pt, mark=triangle*, mark options={solid, mycolor1}, mark size=1.9pt]
  table[row sep=crcr]{%
0	11.8483583616531\\
1	13.6926566513961\\
2	15.4666420903395\\
3	17.9090771716863\\
4	18.4463729168961\\
5	18.6473454768515\\
};
\addlegendentry{Tri-hybrid (Model II, $\rho=0.7$)}

\addplot [color=mycolor1, line width=1pt, mark=square*, mark options={solid, mycolor1}, mark size=1.3pt]
  table[row sep=crcr]{%
0	11.6056726668707\\
1	12.9649967391942\\
2	14.4819877352779\\
3	16.465094477283\\
4	16.839699173221\\
5	16.9171336333246\\
};
\addlegendentry{Tri-hybrid (Model II, $\rho=0.9$)}

\addplot [color=mycolor3, line width=1pt, mark=*, mark options={solid, mycolor3}, mark size=1.6pt]
  table[row sep=crcr]{%
0	10.3705761657611\\
1	11.5924976638508\\
2	12.4393053325297\\
3	12.7866968182182\\
4	12.9221321591952\\
5	12.9994472399703\\
};
\addlegendentry{Tri-hybrid (Model I, hardware)}

\addplot [color=mycolor4, line width=1pt, mark=+, mark options={solid, mycolor4}, mark size=2.5pt]
  table[row sep=crcr]{%
0	9.66192050787265\\
1	10.9044421178127\\
2	11.7658651694964\\
3	12.1098754169075\\
4	12.3117136620451\\
5	12.3971724333544\\
};
\addlegendentry{Hybrid (WMMSE)}

\draw[<->, thick, black] (axis cs:2.1,12.4) -- (axis cs:4.9,12.4);
\draw[-, thick, black] (axis cs:3.5,12.4) -- (axis cs:3.3,11.2);
\node[below, black] at (axis cs:3.3,11.2) {\scriptsize{Reduce 3 RF chains}};

\draw[<->, thick, black] (axis cs:1.1,12.7) -- (axis cs:4.9,12.7);
\draw[-, thick, black] (axis cs:2.6,12.7) -- (axis cs:2.8,13.9);
\node[above right, black] at (axis cs:2.3,13.9) {\scriptsize{Reduce 4 RF chains}};

\end{axis}
\end{tikzpicture}%
    \vspace{-3em}
    \caption{Evaluation of weighted sum-rate versus number of \ac{BS} \ac{RF} chains, where $D=6$ is the total number of data streams required by all users.}
    \label{fig_sim05}
    \vspace{-0.5em}
\end{figure}

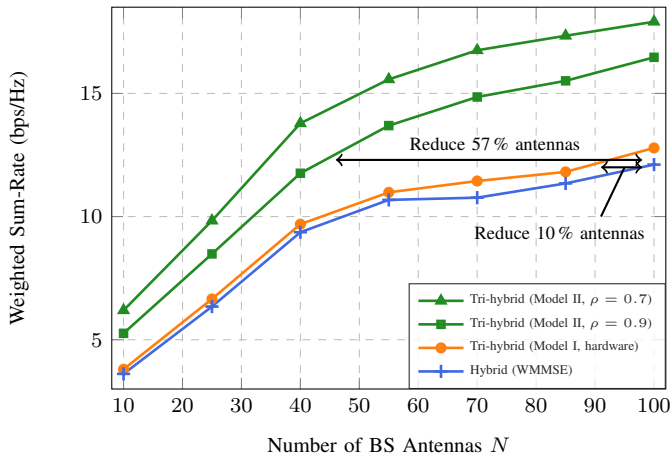
\begin{figure}[t]
    \centering
    % This file was created by matlab2tikz.
%
%The latest updates can be retrieved from
%  http://www.mathworks.com/matlabcentral/fileexchange/22022-matlab2tikz-matlab2tikz
%where you can also make suggestions and rate matlab2tikz.
%
\definecolor{mycolor1}{rgb}{0.13333,0.54510,0.13333}%
\definecolor{mycolor2}{rgb}{0.85490,0.64706,0.12549}%
\definecolor{mycolor3}{rgb}{1.00000,0.49804,0.05490}%
\definecolor{mycolor4}{rgb}{0.25490,0.41176,0.88235}%
\definecolor{mycolor5}{rgb}{0.60000,0.19608,0.80000}%
\begin{tikzpicture}

\begin{axis}[%
width=2.9in,
height=1.7in,
at={(0in,0in)},
scale only axis,
xmin=8,
xmax=102,
xlabel style={font=\color{white!15!black},font=\footnotesize},
xticklabel style = {font=\color{white!15!black},font=\footnotesize},
xtick = {10,20,30,40,50,60,70,80,90,100},
xlabel={Number of BS Antennas $N$},
ymin=3,
ymax=18.5,
ylabel style={font=\color{white!15!black},font=\footnotesize},
yticklabel style = {font=\color{white!15!black},font=\footnotesize},
ylabel={Weighted Sum-Rate (bps/Hz)},
axis background/.style={fill=white},
xmajorgrids,
ymajorgrids,
grid style={dashed},
legend style={
    at={(1,0)},
    anchor=south east,
    font=\tiny,
    legend cell align=left,
    align=left,
    draw=white!15!black,
    fill opacity=0.85
  }
]
\addplot [color=mycolor1, line width=1pt, mark=triangle*, mark options={solid, mycolor1}, mark size=1.9pt]
  table[row sep=crcr]{%
10	6.19457072788849\\
25	9.83457344153823\\
40	13.7852098109186\\
55	15.5701168433036\\
70	16.7512423710616\\
85	17.3400078317156\\
100	17.9090771716863\\
};
\addlegendentry{Tri-hybrid (Model II, $\rho=0.7$)}

\addplot [color=mycolor1, line width=1pt, mark=square*, mark options={solid, mycolor1}, mark size=1.3pt]
  table[row sep=crcr]{%
10	5.26178996510742\\
25	8.48415259856495\\
40	11.7596527854929\\
55	13.6932013471459\\
70	14.8547351848762\\
85	15.5107255528956\\
100	16.465094477283\\
};
\addlegendentry{Tri-hybrid (Model II, $\rho=0.9$)}

\addplot [color=mycolor3, line width=1pt, mark=*, mark options={solid, mycolor3}, mark size=1.6pt]
  table[row sep=crcr]{%
10	3.80111508501751\\
25	6.65523338215881\\
40	9.69085992117057\\
55	10.9833372720943\\
70	11.4432864209644\\
85	11.8138079998274\\
100	12.7866968182182\\
};
\addlegendentry{Tri-hybrid (Model I, hardware)}

\addplot [color=mycolor4, line width=1pt, mark=+, mark options={solid, mycolor4}, mark size=2.5pt]
  table[row sep=crcr]{%
10	3.61389304013674\\
25	6.34765104798253\\
40	9.36403771522432\\
55	10.6772970813369\\
70	10.7673267885922\\
85	11.3454583891942\\
100	12.1098754169075\\
};
\addlegendentry{Hybrid (WMMSE)}

%\draw[-, thick, black] (axis cs:40,12.109) -- (axis cs:98,12.109);
\draw[<->, thick, black] (axis cs:46,12.3) -- (axis cs:98,12.3);
\node[above right, black] at (axis cs:57,12.3) {\scriptsize{Reduce \unit[57]{\%} antennas}};
\draw[<->, thick, black] (axis cs:91,12) -- (axis cs:98,12);
\draw[-, thick, black] (axis cs:95,12) -- (axis cs:91,10);
\node[below, black] at (axis cs:84,10) {\scriptsize{Reduce \unit[10]{\%}} antennas};

\end{axis}

\end{tikzpicture}%
    \vspace{-3em}
    \caption{Evaluation of weighted sum-rate versus number of \ac{BS} antennas.}
    \label{fig_sim06}
    \vspace{-1em}
\end{figure}

\subsubsection{Hardware Efficiency Evaluation}
Apart from the performance gain, this technology holds strong potential for reducing reliance on costly system components. To validate this, Fig.~\ref{fig_sim05} and Fig.~\ref{fig_sim06} illustrate the weighted sum-rate as a function of the number of BS \ac{RF} chains $N_\mathsf{RF}$ and the number of BS antennas $N$, respectively. The results in both figures show that system performance generally improves with an increasing number of RF chains or antennas. In these two figures, we also evaluate the performance of Model~II under various values of the $0^\text{th}$-order harmonic strength $\rho$, as defined in~\eqref{eq:ImpliConst}. A lower value of $\rho$ corresponds to greater flexibility in spherical harmonics-based radiation pattern synthesis. It is evident that increased flexibility in radiation pattern synthesis leads to more significant enhancements in system sum-rate. Moreover, as shown in Fig.~\ref{fig_sim05} and Fig.~\ref{fig_sim06}, the Model~I-based tri-hybrid solution adopting the real hardware patterns can achieve the same level of system sum-rate while reducing three RF chains or approximately $\unit[10]{\%}$ of the antennas. In contrast, the Model~II-based tri-hybrid solution with $\rho = 0.9$ can reduce four RF chains or approximately $\unit[57]{\%}$ of the antennas without compromising performance. Besides, we also observe that when $N_\mathsf{RF}$ and $N$ is small, the hardware cost reduction provided by the tri-hybrid architecture is less pronounced. However, in these regions, since the hardware cost is already low, performance improvement becomes the primary concern. As shown in the both figures, the proposed architecture still offers substantial performance gains in regions with low $N_\mathsf{RF}$ and $N$. These findings demonstrate the potential of the pattern-reconfigurable antenna-based tri-hybrid architecture in enabling low-cost and energy-efficient \ac{MIMO} communications. 

\section{Conclusion}\label{sec:cons}
%This paper investigates the tri-hybrid \ac{MIMO} architecture based on pattern-reconfigurable antennas for multi-user communications. The primary contribution lies in the development of comprehensive foundational models for wireless channels incorporating pattern-reconfigurable antennas, alongside the design of a joint digital, analog, and antenna precoding algorithm with quadratic complexity in the number of transmit antennas. Building upon the proposed models, we validate both the effectiveness of the algorithm and the advantages of the tri-hybrid system through realistic ray-tracing simulations.
This paper investigates a tri-hybrid \ac{MIMO} architecture that integrates digital, analog, and antenna-domain precoding via pattern-reconfigurable antennas for multi-user communication. The primary contribution lies in the development of two channel models to capture varying levels of radiation pattern flexibility and develop the WMMSE-based precoding algorithms with practical power constraints and quadratic complexity. %Building upon the proposed models, we validate both the effectiveness of the algorithm and the advantages of the tri-hybrid system through realistic ray-tracing simulations. 
The developed two models evaluate the system performance under two scenarios: (i) a practical setting constrained by existing pattern-reconfigurable antenna hardware, and (ii) an idealized case where arbitrary radiation patterns can be synthesized. The results allow us to answer the two questions posed in Section~I. For Q1, under current hardware limitations in~\cite{Wang2025Electromagnetically}, the achievable performance gain is modest, yielding approximately $\unit[0.9]{bps/Hz}$ in our simulation environment. However, with more advanced hardware supporting greater pattern flexibility, the performance gain can be substantial, exceeding $\unit[5.8]{bps/Hz}$. For Q2, the proposed tri-hybrid architecture is shown to significantly reduce the number of RF chains and antennas without compromising system performance. Moreover, the higher the degree of pattern reconfigurability, the greater the potential for hardware cost savings.

Nonetheless, a key limitation of this work lies in the reliance of the proposed algorithm on an alternating optimization framework, which does not guarantee global optimality. %Additionally, there remains considerable room for improving the algorithm's computational efficiency. 
Future research directions include {the development of more efficient optimization techniques for real-time applications}, the design of more flexible pattern-reconfigurable antennas, the incorporation of more realistic effects (such as mutual coupling and polarization), and initial experimental validation of the tri-hybrid \ac{MIMO} system.

\appendix
%\footnotesize
\setcounter{equation}{0}
\renewcommand\theequation{A.\arabic{equation}}
%\counterwithin*{equation}{section}
%\renewcommand\theequation{\thesection.\arabic{equation}}

Based on Lemma~\ref{lemma:1}, the optimization of $\fv_{(n)}$ in~\eqref{eq:optfb} given a fixed $\bv_{(n)}$ can be rewritten as
\begin{subequations}\label{eq:optfbAppen}
	\begin{align}
		\min_{\fv_{(n)}}&\quad a\|\fv_{(n)}\|_2^2 + 2\mathrm{Re}\big(\fv_{(n)}^\HH\dv\big) \label{eq:objAppen}\\
		\mathrm{s.t.}&\quad \|\fv_{(n)}\|_2^2 \leq P_n,
	\end{align}
\end{subequations}
where $a = \bv_{(n)}^\TT\Bm_{nn}\bv_{(n)} > 0$ and $\dv = (\Qm_n - \Dm_n)\bv_{(n)}$. It can be inferred that the optimal solution $\fv_{(n)}^\mathsf{opt}$ must be parallel and opposite to $\dv$, i.e.,
\begin{equation}\label{eq:foptxd}
	\fv_{(n)}^\mathsf{opt} = -x\dv, \quad x \in \mathbb{R}_+.
\end{equation}
This can be proven by contradiction: suppose there exists an optimal solution $\fv_{(n)}^\mathsf{opt}$ to~\eqref{eq:optfbAppen} that does not conform to~\eqref{eq:foptxd}. Then, we have $(\fv_{(n)}^\mathsf{opt})^\HH\dv > -\|\fv_{(n)}^\mathsf{opt}\|_2\|\dv\|_2$. In this case, we can always construct a new vector $\tilde{\fv}_{(n)}^\mathsf{opt} = -\frac{\|\fv_{(n)}^\mathsf{opt}\|_2}{\|\dv\|_2}\dv$, which yields a lower objective value in~\eqref{eq:objAppen} than $\fv_{(n)}^\mathsf{opt}$. This contradiction implies that $\fv_{(n)}^\mathsf{opt}$ cannot be optimal, and hence~\eqref{eq:foptxd} must hold.

Substituting~\eqref{eq:foptxd} into~\eqref{eq:objAppen}, the optimization problem becomes
$
		\min_{x}\ ax^2 - 2x \label{eq:objAppen2},\ 
		\mathrm{s.t.}\ x \leq {\sqrt{P_n}}/{\|\dv\|_2},
$
whose optimal solution is given by
$
	x^\mathsf{opt} = \min\left(\frac{1}{a}, \frac{\sqrt{P_n}}{\|\dv\|_2}\right).
$
Combining this with~\eqref{eq:foptxd} gives the expression for $\fv_{(n)}^\mathsf{opt}$ in~\eqref{eq:fn_opt}.

\bibliography{references}
\bibliographystyle{IEEEtran}

\end{document}